\definecolor{blue}{rgb}{0.1,0.2,0.5}
\definecolor{brown}{rgb}{0.6,0.6,0.2}
\renewcommand{\S}{\mathcal{S}}
\def\dd{\hbox{-}}
\newtheorem{theorem}{Theorem}[section]
\newtheorem{corollary}[theorem]{Corollary}
\newtheorem{lemma}[theorem]{Lemma}
\newtheorem{observation}[theorem]{Observation}
\newtheorem{claim}{Claim}[theorem]
\newenvironment{claimproof}[1][\unskip]{\noindent {\emph{Proof of Claim #1.\space}}}{\hfill$\triangleleft$\smallskip}
\newcommand*\samethanks[1][\value{footnote}]{\footnotemark[#1]}
\DeclareMathOperator{\anchor}{anchor}
\newcommand{\N}{\mathbb{N}}
\newcommand{\R}{\mathbb{R}}
\newcommand{\cS}{\mathcal{S}}
\newcommand{\cD}{\mathcal{D}}
\newcommand{\cA}{\mathcal{A}}
\newcommand{\dist}{\mathsf{dist}}
\DeclareMathOperator{\tw}{tw}
\newcommand{\Oh}{\mathcal{O}}
\newcommand{\Heavy}{\mathsf{Heavy}}
\newcommand{\NP}{\textsf{NP}\xspace}
\newcommand{\PP}{\textsf{P}\xspace}
\newcounter{tbox}
\newcommand{\sta}[1]{\vspace*{0.3cm}\refstepcounter{tbox}\noindent{ \parbox{\textwidth}{(\thetbox) \emph{#1}}}\vspace*{0.3cm}}
\newcommand{\vsp}{\vspace*{3mm}}
\let\originalleft\left
\let\originalright\right
\renewcommand{\left}{\mathopen{}\mathclose\bgroup\originalleft}
\renewcommand{\right}{\aftergroup\egroup\originalright}
\renewcommand{\leq}{\leqslant}
\renewcommand{\geq}{\geqslant}
\newcommand{\pot}{\mathrm{potato}}
\newcommand{\boundary}{\mathrm{boundary}}
\newcommand{\wei}{\mathfrak{w}}
\begin{document}

\title{Polynomial-time algorithm for Maximum Independent Set in bounded-degree graphs with no long induced claws%
\footnote{The preliminary version of the paper was presented on the conference SODA 2022~\cite{ACDR21}.}
}
\author{
  Tara Abrishami\thanks{Supported by NSF Grant DMS-1763817 and
    NSF-EPSRC Grant DMS-2120644.}\\
  Princeton University, Princeton, NJ, USA
  \\
  \\
Maria Chudnovsky\samethanks \\
Princeton University, Princeton, NJ, USA
\\
\\
Cemil Dibek\thanks{Supported by NSF Grant DMS-1763817}\\
Princeton University, Princeton, NJ, USA
\\
\\
Pawe{\l} Rz\k{a}\.{z}ewski\thanks{Supported by Polish National Science Centre grant no. 2018/31/D/ST6/00062.}\\
Warsaw University of Technology, Poland/University of Warsaw, Poland
}
\date{}

\begin{titlepage}
\maketitle
\begin{abstract}
For graphs $G$ and $H$, we say that $G$ is $H$-free if it does not contain $H$ as an induced subgraph.
Already in the early 1980s Alekseev observed that if $H$ is connected, then the \textsc{Max Weight Independent Set} problem (MWIS) remains \textsc{NP}-hard in $H$-free graphs, unless $H$ is a path or a subdivided claw, i.e., a graph obtained from the three-leaf star by subdividing each edge some number of times (possibly zero). Since then determining the complexity of MWIS in these remaining cases is one of the most important problems in algorithmic graph theory.

A general belief is that the problem is polynomial-time solvable, which is witnessed by algorithmic results for graphs excluding some small paths or subdivided claws. A more conclusive evidence was given by the recent breakthrough result by Gartland and Lokshtanov [FOCS 2020]: They proved that  MWIS can be solved in quasipolynomial time in $H$-free graphs, where $H$ is any fixed path.
If $H$ is an arbitrary subdivided claw, we know much less: The problem admits a QPTAS and a subexponential-time algorithm [Chudnovsky et al., SODA 2019].

In this paper we make an important step towards solving the problem by showing that for any subdivided claw $H$, MWIS is polynomial-time solvable in $H$-free graphs of bounded degree.
\end{abstract}
\end{titlepage}

\section{Introduction}
For a graph $G=(V,E)$, a set $X \subseteq V$ is \emph{independent} (or \emph{stable}) if there is no edge in $E$ with both endpoints in $X$.
In the \textsc{Max Independent Set} problem (MIS) we ask to compute $\alpha(G)$, i.e., the size of a largest independent set in the instance graph $G$. The \textsc{Max Weight Independent Set} problem (MWIS) is a generalization of MIS where each vertex $v$ is assigned a positive weight $\wei(v)$, and we ask for maximum $W$ such that the instance graph has an independent set $X$ with $W = \sum_{v \in X} \wei(v)$.

MIS (and MWIS) is a ``canonical'' hard problem: it was one of the first problems shown to be \NP-hard~\cite{Karp1972}, it is notoriously hard to approximate~\cite{Hastad96cliqueis,Khot2006}, and it is \textsc{W}[1]-hard~\cite{Cyganetal}. Many of these hardness results hold even if we restrict input instances to some natural graph classes~\cite{GAREY1976237,DBLP:conf/isaac/BonnetBTW19,DBLP:conf/wg/DvorakFRR20}.

In this work we are interested in graph classes defined by forbidding certain substructures.
For graphs $G$ and $H$, we say that $G$ is \emph{$H$-free} if it does not contain $H$ as an induced subgraph.
For simplicity, we will assume that $H$ is connected.

The complexity study of MWIS in $H$-free graphs dates back to the early 1980s and the work of Alekseev~\cite{alekseev1982effect},
who observed that for most graphs $H$ the problem remains \NP-hard.
Indeed, let us discuss the hard cases.
First, MIS (and thus MWIS) is \NP-hard in subcubic graphs~\cite{GAREY1976237}, which are $H$-free whenever $H$ has a vertex of degree at least 4.
For the remaining cases we use the so-called \emph{Poljak construction}~\cite{Po74}: if $G'$ is obtained from $G$ by subdividing one edge twice, then $\alpha(G') = \alpha(G)+1$. Thus, if $G^p$ denotes the graph obtained from $G$ by subdividing each edge exactly $2p$ times, then $\alpha(G^p) = \alpha(G) + p \cdot |E(G)|$.
Now observe that if $H$ has a cycle or two vertices of degree three, then $G^{|V(H)|}$ is $H$-free. Consequently, for such graph $H$, MIS is \NP-hard in $H$-free graphs.
Let us point out that the above hardness reductions imply that the problem cannot even be solved in subexponential time, unless the Exponential-Time Hypothesis (ETH) fails.

Summing up, the only connected graphs $H$ for which we may hope for a polynomial-time algorithm for MIS in $H$-free graphs are paths and \emph{subdivided claws} (or \emph{long claws}), where a subdivided claw is a graph obtained from the three-leaf star $K_{1,3}$ (called the \emph{claw}) by subdividing each edge some number of times (possibly 0).
In what follows, for $t \geq 1$, by $P_t$ we denote the $t$-vertex path.
For integers $a,b,c \geq 1$, by $S_{a,b,c}$ we denote the subdivided claw, where the edges of  $K_{1,3}$ were subdivided $a-1$, $b-1$, and $c-1$ times, respectively. Alternatively, we may think of $S_{a,b,c}$ as the graph obtained from paths $P_{a+1}, P_{b+1}$, and $P_{c+1}$ by identifying one endvertex of each path. We also extend this notation by allowing $a=0$: then $S_{0,b,c}$ is the path $P_{b+c+1}$.

The complexity of MWIS in $H$-free graphs when $H$ is a path or a subdivided claw remains one of the most challenging and important problems in algorithmic graph theory. Despite significant attention received from the graph theory and the theoretical computer science communities, only partial results are known. Let us discuss them.

First, consider the case that $H = P_t$ for some $t$.
Since $P_4$-free graphs, also known as \emph{cographs}, have very rigid structure (in particular, they have clique-width at most 2), the polynomial-time algorithm for MWIS in this class of graphs is rather simple~\cite{CORNEIL1981163}.
However, even for $P_5$-free graphs the situation is much more complicated.
The existence of a polynomial-time algorithm for this case was a long-standing open problem that was finally resolved in the affirmative in 2014 by Lokshtanov, Vatshelle, and Villanger~\cite{DBLP:conf/soda/LokshantovVV14} using the framework of \emph{potential maximal cliques}.
Later, using the same approach but with significantly more technical effort, Grzesik, Klimo\v{s}ova, Pilipczuk, and Pilipczuk~\cite{DBLP:conf/soda/GrzesikKPP19} obtained a polynomial-time algorithm for $P_6$-free graphs.
The case of $P_7$-free graphs remains open. 

However, some interesting algorithmic results can be obtained if we relax our notion of an efficient algorithm.
First, it was shown by Bacs\'o et al.~\cite{DBLP:journals/algorithmica/BacsoLMPTL19} that for every fixed $t$, MWIS can be solved in \emph{subexponential} time $2^{\Oh(\sqrt{n \log n})}$ for $P_t$-free graphs (by $n$ we always denote the number of vertices of the instance graph).
Another subexponential-time algorithm, with worse running time, was obtained independently by Brause~\cite{DBLP:journals/dam/Brause17}.
While these results do not rule out the possibility that the problem is \NP-hard,
let us recall that, assuming the ETH, subexponential algorithms for MWIS in $H$-free graphs cannot exist if $H$ is not a path or a subdivided claw.
Later, Chudnovsky et al.~\cite{DBLP:conf/soda/ChudnovskyPPT20,DBLP:journals/corr/abs-1907-04585} showed that for every fixed $t$, the problem admits a QPTAS in $P_t$-free graphs.
Finally, a very recent breakthrough result by Gartland and Lokshtanov~\cite{gartlandpK} shows that for every fixed $t$, the problem can be solved in $P_t$-free graphs in \emph{quasipolynomial time} $n^{\Oh(\log^3 n)}$; there is also a slightly simpler algorithm by Pilipczuk, Pilipczuk, and Rz\k{a}\.zewski~\cite{DBLP:conf/sosa/PilipczukPR21} with running time $n^{\Oh(\log^2n)}$.
Note that this means that if for some $t$, MWIS is \NP-hard for $P_t$-free graphs, then \emph{all problems} in \NP can be solved in quasipolynomial time. While this does not yet imply that \PP = \NP, it still seems rather unlikely, according to our current understanding of  complexity theory.

The case when $H$ is a subdivided claw is much less understood.
The simplest subdivided claw is the claw $S_{1,1,1}=K_{1,3}$. Claw-free graphs appear to be closely related to \emph{line graphs}~\cite{Claw5} and thus a polynomial-time algorithm for MWIS in claw-free graphs can be obtained by a modification of the well-known augmenting path approach for finding a maximum-weight matching~\cite{SBIHI198053,MINTY1980284} (i.e., a maximum-weight independent set in a line graph).
We highlight the close relation of claw-free graphs and line graphs, as it will play an important role in our paper.
The next smallest subdivided claw is the \emph{fork}, i.e., $S_{2,1,1}$. A polynomial-time algorithm for MIS in fork-free graphs was obtained by Alekseev~\cite{ALEKSEEV20043}. Later, the algorithm was extended to the MWIS problem by Lozin and Milani\v{c}~\cite{DBLP:journals/jda/LozinM08}.
The existence of polynomial-time algorithms in the next simplest cases, i.e., $H = S_{3,1,1}$ and $H=S_{2,2,1}$, is wide open. The existence of a polynomial-time algorithm for MWIS in $H$-free graphs when every connected component of $H$ is a claw was obtained by Brandstädt and Mosca \cite{mwis-claw}. 

Again, some interesting results can be obtained if we look beyond polynomial-time algorithms.
Chudnovsky et al.~\cite{DBLP:conf/soda/ChudnovskyPPT20,DBLP:journals/corr/abs-1907-04585} proved that for every subdivided claw $H$, the MWIS problem in $H$-free graphs admits a QPTAS and a subexponential-time algorithm working in time $n^{\Oh(n^{8/9})}$. We point out that the arguments used for the case when $H$ is a subdivided claw are significantly more complicated and technically involved than their counterparts for $P_t$-free graphs.

More tractability results can be obtained if we put some additional restrictions on the instance graph, e.g., by forbidding more induced subgraphs~\cite{DBLP:journals/endm/LeBS15,p71,p67,p65,p73,p72}. A slightly different direction was considered by Lozin, Milani\v{c}, and Purcell~\cite{DBLP:journals/gc/LozinMP14}, who proved that for every fixed $t$, MWIS is polynomial-time solvable in \emph{subcubic} $S_{t,t,1}$-free graphs. Later, Lozin, Monnot, and Ries~\cite{DBLP:journals/jda/LozinMR15} showed a polynomial time algorithm for MWIS in subcubic $S_{2,2,2}$-free graphs. Finally, Harutyunyan et al.~\cite{DBLP:journals/tcs/HarutyunyanLLM20} generalized both these results by providing a polynomial-time algorithm for MWIS in subcubic $S_{2,t,t}$-free graphs, for any fixed $t$.

We remark that the case when $H$ is a subdivided claw is the only case where the restriction to bounded degree graphs leads to an interesting problem. Indeed, the already mentioned hardness reduction of Alekseev~\cite{alekseev1982effect} shows that if $H$ is not a path nor a subdivided claw, then MIS in \NP-hard even in \emph{subcubic} $H$-free graphs. On the other hand, connected $P_t$-free graphs of bounded degree are of constant size and thus of little interest.

In this work, we continue the study of the complexity of MWIS in $S_{a,b,c}$-free graphs of bounded degree.
As our main result, we show the following theorem. 

\begin{restatable}{theorem}{thmalgo}
\label{thm:algorithm}
For every pair $t, \Delta$ of fixed positive integers,
given an $n$-vertex $S_{t,t,t}$-free graph $G$ with maximum degree at most $\Delta$,
equipped with a weight function $\wei : V(G) \to \N$,
in time polynomial in $n$ we can find a maximum-weight independent set in $G$.
\end{restatable}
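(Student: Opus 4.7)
The plan is a recursive branch-and-reduce algorithm. At each step, given an induced subgraph $G'$ of $G$, we pick a distinguished vertex $v \in V(G')$ and set $X = B_r(v)$ for some constant $r = \Oh(t)$; by bounded degree, $|X| \leq f(t,\Delta)$. We enumerate every independent subset $A \subseteq X$ and, for each such $A$, recurse on $G' - (X \cup N(A))$, adding $A$ to the candidate solution. The algorithm returns the maximum, over all branches, of $\wei(A)$ plus the recursive answer. Correctness is clear: any independent set $I$ in $G'$ satisfies $A := I \cap X \subseteq X$, and $I \setminus X$ is an independent set in $G' - (X \cup N(A))$. The branching factor per call is $2^{\Oh_{t,\Delta}(1)}$, so the overall running time is polynomial provided the recursion tree has polynomially bounded size.

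Bounding the recursion tree is where $S_{t,t,t}$-freeness and bounded degree combine. The plan is to introduce a potential function $\Phi(G') \in \N$ that is polynomially bounded in $n$ and that strictly decreases in every recursive call, along every branch. A natural choice counts the number of \emph{heavy anchors} in $G'$: small connected subgraphs that could serve as roots of a long induced claw. In a bounded-degree graph, the number of connected subgraphs of bounded size is polynomial in $n$, and the $S_{t,t,t}$-free hypothesis further restricts which of them are classified as heavy. The distinguished vertex $v$ used in branching is chosen to lie at the \emph{center} of some heavy anchor, and we argue that this anchor is destroyed once $X \cup N(A)$ is removed, regardless of $A$.

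The main obstacle is guaranteeing the potential decrease in the branch $A = \emptyset$, where only the vertices of $X$ itself are removed. We need $X$ to intersect an anchor whose destruction is unavoidable no matter which vertices we later decide to include. This requires a structural lemma establishing that in any $S_{t,t,t}$-free graph of bounded degree there always exists a vertex $v$ whose local ball $B_r(v)$ contains the center of some heavy anchor, and that branching on this ball is guaranteed to destroy that anchor. I expect the precise definition of \emph{heavy anchor}, together with the interaction of this local-structure lemma with $S_{t,t,t}$-freeness and the degree bound, to form the technically most demanding part of the argument; prior work on subexponential-time algorithms for subdivided-claw-free graphs should serve as a template, with bounded degree providing the extra leverage needed to upgrade the running time from subexponential to polynomial.
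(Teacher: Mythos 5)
Your proposed recursion-tree bound does not work. You want a potential $\Phi(G') \in \N$ that is polynomially bounded in $n$ and strictly decreases along every branch of every recursive call. With a constant branching factor $B = 2^{\Oh_{t,\Delta}(1)}$ and potential bounded by $\text{poly}(n)$, the recursion tree has depth up to $\text{poly}(n)$ and therefore size up to $B^{\text{poly}(n)}$, which is superpolynomial (in fact exponential). A strictly decreasing integer potential controls only the \emph{depth} of the recursion, not the \emph{number of nodes}. To make such a scheme polynomial, you would need either logarithmic depth (which requires $\Phi$ to shrink by a multiplicative factor at each level), or memoization over a polynomial number of distinct subproblems (which your scheme does not provide, since the removed sets $X \cup N(A)$ vary across branches). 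Branch-and-reduce with local balls is exactly the paradigm that gives the prior \emph{subexponential}-time algorithms for $S_{t,t,t}$-free graphs; bounded degree alone does not upgrade this to polynomial without a substantially different idea.

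The paper's proof takes a different route that does achieve multiplicative shrinkage. Its main combinatorial ingredient (Theorem~\ref{thm:combinatorial}) shows that an $S_{t,t,t}$-free graph of bounded degree either has a $c$-balanced separator $S$ of constant size (with $c < 1$ a constant), or, after deleting a constant-size set $X$, admits an \emph{extended strip decomposition} in which every atom has at most $\frac{n}{10\Delta}$ vertices. In the first case one branches over independent subsets of $S$ and recurses on the components of $G - S$, each of size at most $cn$; in the second case one branches over independent subsets of $X$, solves MWIS recursively on each \emph{particle} of the decomposition (each of size at most $0.5n$), and combines the answers in polynomial time via a reduction to \textsc{Maximum Weight Matching} (Lemma~\ref{lem:matching}). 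In both cases the subproblem size drops by a constant factor, the branching factor is constant, and the total work per node is polynomial, giving a polynomial recursion. Your proposal lacks any analogue of this multiplicative progress, and the unformalized ``heavy anchor'' lemma you invoke would, even if established, at best give an additive decrease, which is not enough. The place where the real difficulty lies in this problem is establishing Theorem~\ref{thm:combinatorial}, i.e., producing the balanced-separator-or-small-atom dichotomy; that is what the paper's central-bag and extended-strip-decomposition machinery is for, and nothing in your sketch substitutes for it.
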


Note that our result works for all excluded subdivided claws: the graph $S_{a,b,c}$ is an induced subgraph of $S_{t,t,t}$, where $t = \max(a,b,c)$, and thus $S_{a,b,c}$-free graphs form a subclass of $S_{t,t,t}$-free graphs.

Furthermore, it follows from the work of Lozin et al.~\cite[Theorem~2]{DBLP:journals/jda/LozinMR15} (see also~\citep{DBLP:journals/tcs/HarutyunyanLLM20}) that if $H$ is disconnected, say with connected components $H_1,H_2,\ldots,H_p$,
and for each $i \in [p]$, MWIS  is polynomial-time solvable in $H_i$-free graphs of bounded maximum degree,
then MWIS is polynomial-time solvable in $H$-free graph of bounded maximum degree.
This yields the following generalization of Theorem~\ref{thm:algorithm}.

\begin{corollary}
\label{cor:disconnected}
Let $H$ be a fixed graph, whose every component is a subdivided claw, and let $\Delta$ be a fixed integer.
Given an $n$-vertex $H$-free graph $G$ with maximum degree at most $\Delta$,
equipped with a weight function $\wei : V(G) \to \N$,
in time polynomial in $n$ we can find a maximum-weight independent set in $G$.
\end{corollary}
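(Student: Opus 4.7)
I would prove Corollary~\ref{cor:disconnected} by induction on the number $p$ of connected components of $H$, with the base case $p=1$ handled directly by Theorem~\ref{thm:algorithm} (each component is a subdivided claw). For the inductive step, the plan is to reduce an instance on an $H$-free graph of bounded degree to polynomially many instances on $(H_2 \cup \ldots \cup H_p)$-free graphs of bounded degree.

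The key structural observation is the following: if $G$ contains an induced copy $C$ of $H_1$, then $G \setminus N[V(C)]$ is $(H_2 \cup \ldots \cup H_p)$-free. Indeed, any induced copy $D$ of $H_2 \cup \ldots \cup H_p$ disjoint from $N[V(C)]$ is vertex-disjoint from $V(C)$ and has no edges to $V(C)$, so $V(C) \cup V(D)$ would induce $H$ in $G$, contradicting $H$-freeness. This ``blocking'' property is the heart of the reduction and is essentially the content of \cite[Theorem~2]{DBLP:journals/jda/LozinMR15} specialized to our setting.

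The algorithm then proceeds as follows. First, check whether $G$ contains any induced copy of $H_1$; since $|V(H_1)|$ is a constant, this can be done by brute force in polynomial time. If no such copy exists, then $G$ is $H_1$-free of bounded degree, and we invoke Theorem~\ref{thm:algorithm} directly. Otherwise, fix an induced copy $C$ of $H_1$, set $B := N[V(C)]$, and note $|B| \leq |V(H_1)|(\Delta+1) = \Oh(1)$. Then we use the standard ``guess the intersection with a bounded-size set'' trick:
\[
\alpha_{\wei}(G) \;=\; \max_{\substack{I_B \subseteq B \\ I_B \text{ independent}}} \Bigl( \wei(I_B) + \alpha_{\wei}\bigl( G - B - N(I_B) \bigr) \Bigr),
\]
where $\alpha_\wei$ denotes the maximum weight of an independent set. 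There are at most $2^{|B|} = \Oh(1)$ choices for $I_B$, and for each of them $G - B - N(I_B)$ is an induced subgraph of $G \setminus N[V(C)]$, hence $(H_2 \cup \ldots \cup H_p)$-free, still of maximum degree at most $\Delta$. By the inductive hypothesis, each recursive call runs in polynomial time, and since the recursion depth is $p = \Oh(1)$ and the branching factor at each level is a constant, the total running time is polynomial in $n$.

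The main concern is bookkeeping rather than any genuine obstacle: one must verify that the recursion is well-founded (which it is, since $p$ strictly decreases at each level and the base case is Theorem~\ref{thm:algorithm}), and that the reduction preserves both bounded degree and the required induced-subgraph-freeness (both are immediate for the induced subgraph $G - B - N(I_B)$). No subtle analysis is needed beyond the blocking observation stated above.
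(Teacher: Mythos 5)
Your proof is correct. The paper establishes this corollary purely by citing the composability theorem of Lozin, Monnot, and Ries and combining it with Theorem~\ref{thm:algorithm}; your argument is a clean, self-contained reconstruction of precisely that cited reduction (the blocking observation that an induced copy $C$ of $H_1$ forces $G \setminus N[V(C)]$ to avoid $H_2 \cup \dots \cup H_p$, together with the guess-over-the-bounded-set $N[V(C)]$ recursion), so the two approaches coincide.
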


\subsection{Outline of the proof}
\label{subsec:outline} 

For a graph $G$, a set $Z \subseteq V(G)$ is called \emph{constricted} if there is no induced tree in $G$ that contains $Z$.
The starting point of our algorithm is a result of Chudnovsky and Seymour~\cite{3-in-a-tree} that states that if a graph $G$ contains a constricted set of size 3, then $G$ admits a certain structure, called an \emph{extended strip decomposition}.
This (very roughly) means that $G$ is similar in structure to the line graph of another graph $H$, where $G$ breaks into \emph{atoms} that correspond to vertices, edges, and triangles of $H$.
One can then attempt to use this similarity to solve the MWIS problem.
The problem with this approach is that it only works when no atom contains ``almost all'' the vertices of the graph. 

It is not difficult to see that by deleting a set $X \subseteq V(G)$ of bounded size from an $S_{t,t,t}$-free graph $G$ of bounded degree,
it is possible to obtain  a graph with a constricted set of size 3.
One can then try to make use of the extended strip decomposition of $G - X$, by showing  that such a decomposition can be found where all atoms have roughly equal sizes.
While this is a simple idea, it has remained out of reach for years. In this paper, combining the ``central bag method'' recently developed by Abrishami et al.~\cite{wallpaper} with  an in-depth analysis of the process through which extended strip decompositions are constructed,
we have finally been able to show that an extended strip decomposition where all atoms are small can indeed be found, unless $G$ contains a bounded-size balanced separator. Note that in the latter case MWIS can be solved in $G$ by an easy recursion.
On a very high level, we first identify an ``important'' claw-free  induced subgraph $\beta$ of $G$ and show that $\beta$ admits
the required decomposition, and then, using certain monotonicity properties, we extend the decomposition to the whole graph.

In what follows we describe our approach in more detail.
Let $t \geq 1$ and $\Delta \geq 3$ be fixed integers and let $G$ be an $n$-vertex $S_{t,t,t}$-free graph with maximum degree at most $\Delta$.

The proof consists of five steps and the first four of them are purely graph-theoretic.
As mentioned above, in these steps we work under an additional assumption that for some $c = c(t, \Delta) < 1$, the graph $G$ has no \emph{$c$-balanced separator} of constant size,
where a set $X \subseteq V(G)$ is \emph{$c$-balanced} if every component of $G-X$ has at most $c \cdot n$ vertices.

\paragraph{Step 1. Finding a central bag decomposition.}
In the first step, we invoke the recent result of Abrishami et al.~\cite{wallpaper} which provides a certain hierarchical decomposition of $G$, called the \emph{central bag decomposition}.
More precisely, it gives us a sequence $\beta_0, \beta_1, \ldots, \beta_k$ of connected subgraphs of $G$, such that
\begin{itemize}
\itemsep0em
\item $\beta_0 = G$ and $\beta_k$ is claw-free,
\item for each $i \in [k]$, the graph $\beta_i$ is an induced subgraph of $\beta_{i-1}$,
\item for each $i \in [k]$, every component of $\beta_{i-1} - \beta_i$ is small and its neighborhood in $\beta_i$ has bounded size and simple structure,
\item for each $i \in [k]$, the graph $\beta_i$ has no balanced separator of bounded size.
\end{itemize}
For brevity, let us denote $\beta := \beta_k$ and call it the \emph{central bag}.
We can imagine that $\beta$ is a ``core'' of $G$, and $G$ can be obtained from $\beta$ by reversing the central bag decomposition. In other words, we can start with $\beta$ and then enlarge it by iteratively adding small subgraphs that have simple interaction with the core, until we obtain the whole graph $G$. This step is described in Section~\ref{sec:bagdecomposition}.

\paragraph{Step 2. Finding a strip decomposition of the central bag.}
In the second step we analyze the central bag $\beta$.
The structure of claw-free graphs is quite well understood, due to the work of Chudnovsky and Seymour~\cite{Claw5}.
In particular, their results imply that every claw-free graph $F$  admits a \emph{strip decomposition}.
More precisely, there is some graph $H$ and a function $\eta$ mapping edges of $H$ to sets of vertices of $F$,
such that
\begin{itemize}
\itemsep0em
\item the sets $\eta(e)$ for $e \in E(H)$, called \emph{atoms}\footnote{Actually, an atom is a certain subset of $\eta(e)$, but in this high-level exposition let us not go into the detail.}, form a partition of $V(F)$,
\item for every edge $xy \in E(F)$, either $x,y \in \eta(e)$ for some $e \in E(H)$, or $x \in \eta(e)$ and $y \in \eta(f)$, where $e,f \in E(H)$ are distinct and $e$ and $f$ share a vertex,
\item for distinct $e, f \in E(H)$, such that $e$ and $f$ share a vertex, the interaction between sets $\eta(e)$ and $\eta(f)$ is well-structured.
\end{itemize}
We carefully analyze the strip decomposition obtained for $\beta$. We show that our assumptions on $\beta$,
in particular that $\beta$ has bounded degree and has no balanced separator of bounded size,
imply that  we can obtain a strip decomposition $(H,\eta)$ with useful properties. In particular, each atom of $(H,\eta)$ is small.

However, in order to proceed, we need $(H,\eta)$ to have a slightly more rigid structure.
We modify $G$ into a graph $G'$ by adding three special vertices $v_1,v_2,v_3$, which are simplicial in $G'$ (i.e., the neighborhood of each of them is a clique). Next we obtain $G''$ by removing from $G'$ a carefully chosen set $X \subseteq V(G')$ of bounded size.
We ensure the following properties of $G''$:
\begin{itemize}
\itemsep0em
\item $G''$ has maximum degree at most $\Delta$,
\item $ V(\beta) \cup \{v_1,v_2,v_3\}$ is disjoint from $X$ and it induces a connected subgraph $\beta''$ of $G''$,
\item the set $\{v_1,v_2,v_3\}$ is constricted in $G''$.
\end{itemize}
We also modify $(H,\eta)$ into a strip decomposition $(H'',\eta'')$ of $\beta''$, making sure that all atoms are still small and $(H'',\eta'')$ is \emph{tame}, which means that it satisfies certain additional technical conditions concerning the vertices $v_1,v_2,v_3$.

The property that $\{v_1,v_2,v_3\}$ is constricted allows us to use the tools developed by Chudnovsky and Seymour~\cite{3-in-a-tree} in their  solution to the famous \emph{three-in-a-tree} problem (see \cite{detect-odd, near-linear, detect-long-odd} for recent important developments involving the three-in-a-tree problem).
By the already-mentioned result it follows that every graph with a constricted three-element set admits the so-called \emph{extended strip decomposition}, which is a generalization of strip decompositions of claw-free graphs.
An extended strip decomposition of a graph $F$ is again a pair $(H,\eta)$, where $H$ is a graph and $\eta$ is a function whose domain consists of all edges of $H$ (as it was in the case of strip decompositions), and also all vertices and triangles of $H$. In particular, in addition to edge atoms, we also have vertex atoms and triangle atoms (where a {\em triangle} is a clique of size three).
The sets given by $\eta$ form a partition of $V(F)$ and the interactions between vertices from distinct sets are restricted. In particular, they are \emph{local} in $H$.

Let us point out that the decomposition $(H'',\eta'')$ of $\beta''$ is in particular an extended strip decomposition.
Furthermore, as all vertex atoms and triangle atoms are empty, we still have the property that all atoms are small.

So the outcome of this step is a set $X$ of bounded size that was removed from a graph, and a tame extended strip decomposition $(H'',\eta'')$ of the central bag (with three additional vertices added), where every atom is small.
This step is described in Section~\ref{sec:centralbag}.

\paragraph{Step 3. Extending the strip decomposition of the central bag.}
Recall that $\beta_k,\beta_{k-1},\ldots,\beta_0$ is the central bag decomposition of $G$.
Additionally, for all $i \in [k]$, the graph $\beta_i$ is an induced subgraph of $\beta_{i-1}$ and $X$ is disjoint from $\beta_k$.

The main result of this step is the following statement.
Suppose that for some $i \in [k]$, we are given a tame extended strip decomposition $(H_i,\eta_i)$ of the connected component of $\beta_i - X$,
containing $\beta_k$, such that each atom of $(H_i,\eta_i)$ is small.
Then in polynomial time we can find a tame extended strip decomposition $(H_{i-1},\eta_{i-1})$ of the connected component of $\beta_{i-1} - X$,
containing $\beta_k$, such that each atom of $(H_{i-1},\eta_{i-1})$ is small.

The proof combines the techniques developed by Chudnovsky and Seymour~\cite{3-in-a-tree} and deep structural analysis of the central bag decomposition obtained in Step~1. This step is described in Section~\ref{sec:extending}.

\paragraph{Step 4. Obtaining the extended strip decomposition of $G$.}
Using the outcome of Step~2 as the base case and the statement obtained in Step~3 as an inductive step, we can prove that the component of $G-X$ containing the central bag admits an extended strip decomposition whose every atom is small.
Then, using the properties of the central bag decomposition and the fact that $X$ is of bounded size,
we can further extend this decomposition so that it covers other components of $G-X$.

The following theorem, proven in Section~\ref{sec:combinatorial}, encapsulates the outcome of the first four steps of the proof.
It is the main technical contribution of our paper.

\begin{restatable}{theorem}{thmcomb}
\label{thm:combinatorial}
For every pair $t,\Delta$ of fixed positive integers, there exist $c \in [\frac{1}{2}, 1)$ and integers $d$, $z$ and $p$ with the following property.
Given an $n$-vertex $S_{t,t,t}$-free graph $G$ with maximum degree at most $\Delta$, either $G$ has a $c$-balanced  separator of size at most $d$, or in time $\Oh(n^{p})$ we can find a set $X \subseteq V(G)$ of size at most $z$
and an extended strip decomposition $(H,\eta)$ of $G-X$, where $H$ has at most $n$ vertices and every atom of $(H,\eta)$ has at most $\frac{1}{10\Delta} \cdot n$ vertices.
\end{restatable}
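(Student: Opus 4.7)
The plan is to follow the four-step program outlined in Section~\ref{subsec:outline}. Throughout, we fix constants $c, d, z, p$ to be determined in terms of $t$ and $\Delta$; if $G$ admits a $c$-balanced separator of size at most $d$, the first conclusion holds, so assume otherwise. As a first step, I invoke the central bag decomposition of Abrishami et al.~\cite{wallpaper} essentially as a black box, producing a sequence $\beta_0 = G, \beta_1, \ldots, \beta_k = \beta$ of induced subgraphs such that $\beta$ is claw-free, every $\beta_i$ still has maximum degree at most $\Delta$ and no balanced separator of bounded size, and $\beta_{i-1}$ differs from $\beta_i$ only by small pieces with simple, controlled attachments.

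In the second step, I analyze the claw-free central bag $\beta$ using the Chudnovsky--Seymour structure theory~\cite{Claw5}, which supplies a strip decomposition $(H, \eta)$ of $\beta$. The key claim is that the absence of a bounded balanced separator in $\beta$ forces every atom to have at most $\frac{1}{10\Delta} \cdot n$ vertices: a large atom together with its (bounded) interface in the strip decomposition would otherwise constitute a bounded balanced separator. To lift $(H, \eta)$ to a genuine extended strip decomposition and to set up the induction, I modify the graph by attaching three simplicial vertices $v_1, v_2, v_3$ at carefully chosen locations and removing a small bad set $X$, obtaining $\beta''$ in which $\{v_1, v_2, v_3\}$ is constricted. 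The three-in-a-tree theorem~\cite{3-in-a-tree} then certifies that the modified object is a \emph{tame} extended strip decomposition $(H'', \eta'')$ of $\beta''$ whose atoms remain small.

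The third and fourth steps extend $(H'', \eta'')$ outward. I proceed by downward induction on $i$: given a tame extended strip decomposition with small atoms of the component of $\beta_i - X$ containing $\beta$, I construct one for the corresponding component of $\beta_{i-1} - X$. Each component of $\beta_{i-1} - \beta_i$ is small and attaches through a bounded, simply structured neighborhood, so using the three-in-a-tree framework and the tameness condition, I place it either inside an existing atom or as a fresh edge/vertex/triangle atom. After $k$ iterations I obtain a tame extended strip decomposition of the component of $G - X$ containing $\beta$ with small atoms, and the remaining (also small) components of $G - X$ can be appended as new vertex atoms without exceeding the threshold.

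I expect Step~3 to be the main obstacle. Two demands must be satisfied simultaneously: a combinatorial one, since the Chudnovsky--Seymour three-in-a-tree classification has to be navigated carefully to decide how each reintroduced piece is incorporated into the extended strip decomposition while preserving tameness; and a quantitative one, since each insertion must respect the $\frac{1}{10\Delta} \cdot n$ atom-size bound, which depends on exploiting the absence of small balanced separators inside each $\beta_i$ to force the added mass to spread across many atoms rather than concentrate in one. Reconciling these two requirements, level after level, is what distinguishes this argument from earlier attempts to turn the Chudnovsky--Seymour extended strip decomposition into an algorithmic tool.
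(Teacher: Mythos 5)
Your proposal follows the same five-step program as the paper's proof: central bag decomposition, claw-free strip decomposition of the central bag with small atoms, attaching simplicial vertices $v_1,v_2,v_3$ to create a constricted triple and removing a small set to keep them far from the rest, inductively extending the extended strip decomposition outward through the three-in-a-tree machinery, and finally appending the remaining components as isolated-vertex atoms. The key inductive lemma (Lemma~\ref{lemma:esd-induction}), which applies Theorem~\ref{thm:extendH} as a black box and then verifies the atom-size bound via an analysis of ``heavy'' vertices and ``serious'' separations, is correctly flagged as the main obstacle but left at the level of a plan rather than filled in.
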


We note that a similar statement was obtained by Chudnovsky et al.~\cite[Lemma 6.5]{DBLP:journals/corr/abs-1907-04585}: They also obtain a set $X$ and an extended strip decomposition of $G-X$, where the atoms are ``small'' (although this is not measured in the number of vertices, but the total weight).
However, the main difference is that set $X$ given by  Chudnovsky et al.~\cite[Lemma 6.5]{DBLP:journals/corr/abs-1907-04585} is not of bounded size, even if the maximum degree is bounded. Thus it is not useful for our application.

\paragraph{Step 5. Solving MWIS by a reduction to \textsc{Maximum Weight Matching}.}
Finally, in the last step we use Theorem~\ref{thm:combinatorial} to prove Theorem~\ref{thm:algorithm}.
The algorithm is recursive and consists of two main steps. We use notation from Theorem~\ref{thm:combinatorial}.

First, suppose that $G$ has a $c$-balanced separator $X$ of size at most $d_{t,\Delta}$.
Then we can exhaustively guess the intersection of a fixed optimum solution with $X$ and solve the problem for each component of $G-X$ independently. The running time is polynomial, as each component is of size at most $c \cdot n$ for a constant $c<1$.

In the other case, if no such separator exists, Theorem~\ref{thm:algorithm} gives us the set $X$ of bounded size and an extended strip decomposition $(H,\eta)$ of $G-X$, where every atom is small.
We follow the idea of Chudnovsky et al.~\cite{DBLP:conf/soda/ChudnovskyPPT20,DBLP:journals/corr/abs-1907-04585}.
We define certain induced subgraphs of $G$ that we call \emph{particles}.%
\footnote{These subgraphs are called \emph{atoms} in~\cite{DBLP:conf/soda/ChudnovskyPPT20,DBLP:journals/corr/abs-1907-04585}, but we decided to use another name to avoid confusion with atoms used in Steps~2, 3, and 4.}
These subgraphs are defined with respect to $(H,\eta)$, in particular every atom of $(H,\eta)$ is a particle, but there are also particles consisting of several (but still constant number of) atoms. The fact that each atom has at most $\frac{1}{10\Delta} \; n$ vertices implies that each particle has at most $0.5n$ vertices.

Chudnovsky et al.~\cite{DBLP:conf/soda/ChudnovskyPPT20,DBLP:journals/corr/abs-1907-04585} showed that solving MWIS can be reduced to
\begin{itemize}
\itemsep0em
\item solving MWIS for all particles,
\item solving \textsc{Maximum Weight Matching} on an instance graph obtained by a simple modification of $H$.
\end{itemize}
Thus we can exhaustively guess the intersection of a fixed optimum solution with $X$,
solve the problem for each particle (here we use the fact that they are small),
and then combine the results in polynomial time solving using one of the standard algorithms for finding maximum weight matchings~\cite{edmonds_1965,DBLP:journals/mp/GoldbergK04}.
As all particles are small and the number of vertices of $H$ is at most $n$, the running time is polynomial in $n$.

\section{Background and definitions}

Let $G = (V(G), E(G))$ be a graph.
For a set $X \subseteq V(G)$, by $G[X]$ we denote the subgraph of $G$ induced by $X$ and $G \setminus X = G[V(G) \setminus X]$.
However, if it does not lead to confusion, we will often use induced subgraphs and their vertex sets interchangeably.
For sets $X, Y \subseteq V(G)$, we say that $X$ is {\em complete to} $Y$ if for every $x \in X$ and $y \in Y$, it holds that $xy \in E(G)$. We say that $X$ is {\em anticomplete to} $Y$ if for every $x \in X$ and $y \in Y$, it holds that $xy \not \in E(G)$. 

A {\em path} is a graph $P$ with vertex set $V(P) = \{v_1, v_2, \hdots, v_k\}$ and edge set $E(P) = \{v_iv_{i+1}: 1 \leq i \leq k-1\}$. We denote a path $P$ by $p_1 \dd p_2 \dd \hdots \dd p_k$. The {\em length} of a path $P$ is the number of edges in $P$. For $X, Y \subseteq V(G)$, we say that a path $P = p_1\dd \hdots \dd p_k$ with $P \subseteq V(G)$ is a {\em path from $X$ to $Y$} if $P \cap X = \{p_1\}$ and $P \cap Y = \{p_k\}$. The {\em distance between $X$ and $Y$ in $G$} is the length of a shortest path from $X$ to $Y$ in $G$. We denote the distance between $X$ and $Y$ by $\dist_G(X, Y)$. If $x, y \in V(G)$, then we define the {\em distance between $x$ and $y$}, denoted $\dist_G(x, y)$, as $\dist_G(x, y) = \dist_G(\{x\}, \{y\})$. 

Let $v \in V(G)$. By $N_G(v)$ we denote the set of vertices in $G$ adjacent to $v$, and $N_G[v] = \{v\} \bigcup N_G(v)$. For $S \subseteq V(G)$, we denote $N_G[S]=\bigcup_{v\in S} N_G[v]$ and $N_G(S) =N_G[S]\setminus S$. By $N^d_G[S]$ we denote the set of vertices at distance at most $d$ from $S$ in $G$. If the graph $G$ is clear from the context, we simplify the notation by dropping the subscript. By $\Delta(G)$ we denote the maximum degree of a vertex in $G$. 

A function $w$ is a {\em weight function on $V(G)$} if $w: V(G) \to \R$. If $w$ is a weight function on $V(G)$ and $X \subseteq V(G)$, we define $w(X) = \sum_{x \in X} w(x)$.  Let $w: V(G) \to [0, 1]$ be a weight function on $V(G)$ with $w(G) = 1$. A set $X \subseteq V(G)$ is a {\em $(w, c)$-balanced separator of $G$} if $w(D) < c$ for every connected component of $G \setminus X$. A \emph{$c$-balanced separator} is a $(w, c)$-balanced separator for the weight function $w$ on $V(G)$ given by $w(v) = \frac{1}{|V(G)|}$ for all $v \in V(G)$.
Balanced separators are related to a graph parameter called treewidth. Roughly speaking, the treewidth is a measure of how ``tree-like'' a graph is; we omit a precise definition here as it is not used in the paper.
Instead, we will work with balanced separators of bounded size; it appears that their existence is closely related to the treewidth of a graph.

\begin{lemma}[\cite{HarveyWood}]
\label{lem:tw_to_bsp}
Let $G$ be a graph.
\begin{enumerate}
\itemsep0em
\item  If the treewidth of $G$ is at most $k$, then $G$ has a $(w, c)$-balanced separator of size at most $k+1$ for every weight function $w$ with $w(G) = 1$ and for every $c \in [\frac{1}{2}, 1)$. 
\item If for some $c \in [\frac{1}{2}, 1)$ it holds that $G$ has a $(w, c)$-balanced separator of size $k$ for every weight function $w$ with $w(G) = 1$, then the treewidth of $G$ is at most $\frac{1}{1-c} \; k$.
\end{enumerate}
\end{lemma}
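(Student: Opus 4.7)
For part (1), my plan is a classical ``sink'' argument applied to a tree decomposition. Take any tree decomposition $(T,\{B_t\}_{t\in V(T)})$ of $G$ of width at most $k$, so $|B_t| \leq k+1$ for every $t$. For every edge $e = st \in E(T)$, let $T_s, T_t$ be the two components of $T - e$ (containing $s,t$ respectively), and set $V_s := \bigcup_{u \in V(T_s)} B_u$ and $V_t := \bigcup_{u \in V(T_t)} B_u$. Orient $e$ from $s$ to $t$ whenever $w(V_t \setminus B_s) \geq w(V_s \setminus B_t)$, breaking ties arbitrarily. Since $T$ is a finite (undirected) tree, any orientation of its edges admits at least one sink $t^\star$ (follow outgoing edges from any starting vertex; the walk must terminate since $T$ is acyclic). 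For each neighbor $s$ of $t^\star$ the edge orientation gives $w(V_s \setminus B_{t^\star}) \leq 1/2 \leq c$, and every connected component of $G - B_{t^\star}$ is contained in $V_s \setminus B_{t^\star}$ for exactly one such $s$. Thus $B_{t^\star}$ is a $(w,c)$-balanced separator of size at most $k+1$.

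For part (2), I would build a tree decomposition of $G$ by top-down recursion. At each node one processes a pair $(U, W)$, where $U \subseteq V(G)$ is the currently active set and $W \subseteq N_G[U]$ is a boundary, maintained as an invariant to satisfy $|W| \leq \tfrac{ck}{1-c}$. Apply the hypothesis to the weight function placing mass $1/|U|$ on each $v \in U$ and $0$ elsewhere; this produces a $(w,c)$-balanced separator $X \subseteq V(G)$ of size at most $k$. Emit a bag $W \cup X$ of size at most $\tfrac{ck}{1-c} + k = \tfrac{k}{1-c}$, and for each connected component $C$ of $G[U \setminus X]$ recurse with $U' := C$ and $W'$ chosen as an appropriate subset of $(W \cup X) \cap N_G[C]$. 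Since the weights force $|U'| \leq c|U|$, the recursion terminates, and the recursive subtrees attach to the current bag in the standard way to yield a valid tree decomposition of $G$ of width at most $\tfrac{k}{1-c}-1$.

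The main obstacle is in part (2): verifying the boundary invariant $|W'| \leq \tfrac{ck}{1-c}$. A naive choice $W' := (W \cup X) \cap N_G(C)$ gives only $|W'| \leq |W| + |X|$, which telescopes to a treewidth bound of $O(k \log n)$. Obtaining the dimension-free $k/(1-c)$ bound stated in the lemma requires a more careful selection of $W'$ that discards boundary vertices no longer interacting with $U'$, combined with amortizing the accumulated contributions against the geometric decrease of $|U|$; this delicate bookkeeping is the content of the cited Harvey--Wood argument~\cite{HarveyWood}, and it is what makes the hypothesis ``balanced separator exists for \emph{every} weight function'' (as opposed to just uniform weights on induced subgraphs) essential.
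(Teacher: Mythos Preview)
The paper does not prove this lemma; it is quoted verbatim from \cite{HarveyWood} and used as a black box. So there is no ``paper's proof'' to compare against, and your write-up should be read as a standalone sketch of a known result.

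Part (1) is the standard sink argument and is correct (up to the usual strict-versus-nonstrict fuss at $c=\tfrac12$, which can be handled by breaking ties towards the smaller side).

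In part (2) you have identified the right place where the argument bites, but you have misdiagnosed the fix. Putting weight uniformly on $U$ guarantees $|U'|\leq c|U|$ and hence termination, but it says nothing about $|W'|$; and no ``amortization against the geometric decrease of $|U|$'' will rescue the invariant --- with that weight choice you are genuinely stuck at $|W|=O(k\log n)$, because the recursion has depth $\Theta(\log_{1/c} n)$ and each level adds $k$ to the boundary. The actual trick, and the reason the hypothesis is stated for \emph{every} weight function, is to put the weight on the boundary $W$ rather than on $U$: set $w(v)=1/|W|$ for $v\in W$ and $0$ elsewhere. Then the guaranteed $(w,c)$-balanced separator $X$ of size $\leq k$ forces every component $C$ to satisfy $|W\cap C|\leq c|W|$, so the new boundary obeys $|W'|\leq c|W|+k$. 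This recurrence stabilises at $|W|\leq k/(1-c)$, and the invariant holds at every node with no bookkeeping at all. Termination and the exact bag-size accounting take a little extra care (one typically combines the boundary weight with a small uniform weight on $U$, or stops once $|U|$ drops below the boundary bound), but the conceptual gap in your sketch is the choice of weight function, not any delicate amortisation.
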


\subsection{Separations and central bags} \label{sec:bagdecomposition}
A {\em separation} of a graph $G$ is a triple $(A, C, B)$ of pairwise-disjoint vertex subsets of $G$ such that $A \cup C \cup B = V(G)$ and $A$ is anticomplete to $B$. If $S = (A, C, B)$ is a separation, we write $A(S) = A$, $C(S) = C$, and $B(S) = B$. In \cite{wallpaper}, separations are used to construct a sequence of iterated decompositions of a graph. In this section, we summarize the key results from \cite{wallpaper}, which are integral to the proof of our main combinatorial result. For the remainder of the paper, we assume that if $(A, C, B)$ is a separation, then $C$ is non-empty. 

Let $c \in [\frac{1}{2}, 1)$ and let $d$ be a positive integer, let $G$ be a graph with maximum degree $\Delta$, and let $w: V(G) \to [0, 1]$ be a weight function on $V(G)$ with $w(G) = 1$. Suppose $G$ has no $(w, c)$-balanced separator of size at most $d$. Let $S = (A, C, B)$ be a separation of $G$ such that $|C| \leq d$. Then, it follows that either $w(A) > c$ or $w(B) > c$, otherwise $C$ would be $(w, c)$-balanced separator of $G$ of size at most $d$. For the remainder of the paper, if $G$ has no $(w, c)$-balanced separator of size at most $d$ and $S = (A, C, B)$ is such that $|C| \leq d$, we assume by convention that $w(B) > c$ and $w(A) < 1-c$.  

Two separations $S_1 = (A_1, C_1, B_1)$ and $S_2 = (A_2, C_2, B_2)$ are {\em $A$-loosely non-crossing} if $A_1 \cap C_2 = A_2 \cap C_1 = \emptyset$. A {\em sequence of separations} is an ordered collection of separations. A sequence $\S$ of separations is {\em $A$-loosely laminar} if $S_1$ and $S_2$ are $A$-loosely noncrossing for every distinct $S_1, S_2 \in \S$. Two separations $S_1 = (A_1, C_1, B_1)$ and $S_2 = (A_2, C_2, B_2)$ are {\em non-crossing} if (possibly exchanging the roles of $A_1$ and $B_1$, and of
$A_2$ and $B_2$) $A_1 \cap C_2 = \emptyset$, $A_2 \cap C_1 = \emptyset$, and $A_1 \cap A_2 = \emptyset$. A sequence $\S$ of separations is {\em laminar} if $S_1$ and $S_2$ are non-crossing for every distinct $S_1, S_2 \in \S$. 

Let $G$ be a graph with no $d$-bounded $(w, c)$-balanced separator, and let $\S$ be an $A$-loosely laminar sequence of $d$-bounded separations of $G$. The {\em central bag for $\S$}, denoted $\beta_\S$, is 
$$\beta_\S = \bigcap_{S \in \S} \left(C(S) \cup B(S)\right).$$

We equip every $A$-loosely laminar sequence $\S$ of separations with an {\em anchor map}, which is a map $\anchor_\S: \S \to V(G)$ such that $\anchor_\S(S) \in C(S)$ for every $S \in \S$. We use the anchor map to define a weight function $w_\S$ on $\beta_\S$. Let $\S = (S_1, \hdots, S_k)$, and let $w^*(A(S_i)) = w(A(S_i) \setminus \bigcup_{1 \leq j < i} A(S_j))$. The {\em weight function on $\beta_\S$}, denoted $w_\S$, is defined as follows: $w_\S(v) = w(v) + \sum_{S \in \anchor_\S^{-1}(v)} w^*(A(S))$ for all $v \in \beta_\S$.  Several properties of central bags are summarized in the next lemma. 

\begin{lemma}[Abrishami et al.~\cite{wallpaper}]\label{lemma:central_bags}
Let $c \in [\frac{1}{2}, 1)$ and let $d, \Delta$ be positive integers. Let $G$ be a graph with maximum degree $\Delta$, let $w: V(G) \to [0, 1]$ be a weight function on $V(G)$ with $w(G) = 1$, and suppose $G$ has no $(w, c)$-balanced separator of size at most $d$. Let $\S$ be an $A$-loosely laminar sequence of $d$-bounded separations of $G$, let $\beta_\S$ be the central bag for $\S$, and let $w_\S$ be the weight function on $\beta_\S$. Then, the following properties hold: 
\begin{enumerate}[(i)]
\itemsep0em
    \item $C(S) \subseteq \beta_\S$ for all $S \in \S$; 
    
    \item For every component $D$ of $G \setminus \beta_\S$, there exists $S \in \S$ such that $D \subseteq A(S)$; 
    
    \item $w_\S(\beta_\S) = 1$; and
    
    \item $w(N[A(S)]) \leq w_\S(C(S))$ for all $S \in \S$.
    
\end{enumerate}
\end{lemma}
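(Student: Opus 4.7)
My plan is to verify the four items in order, using the earlier ones as building blocks. Parts (i)--(iii) should reduce to essentially formal manipulations once the definitions of $\beta_\S$, $w_\S$, and the $A$-loosely non-crossing condition are unpacked; the main obstacle will be (iv), where I must reason about the specific construction of the anchor map in Abrishami et al.~\cite{wallpaper}.

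For (i), I fix distinct $S,S'\in\S$. The $A$-loosely non-crossing condition gives $A(S')\cap C(S)=\emptyset$, so $C(S)\subseteq V(G)\setminus A(S')=C(S')\cup B(S')$; intersecting over all $S'\in\S$ (and trivially over $S'=S$) yields $C(S)\subseteq\beta_\S$. For (ii), complementing the defining intersection gives $V(G)\setminus\beta_\S=\bigcup_{S\in\S}A(S)$. Let $D$ be a component of $G\setminus\beta_\S$ and pick $v\in D$ with $v\in A(S)$. If some $u\in D$ lies outside $A(S)$, then a path in $D$ from $v$ to $u$ contains an edge $xy$ with $x\in A(S)$ and $y\notin A(S)$; anticompleteness of $A(S)$ and $B(S)$ forces $y\in C(S)$, contradicting $y\in D\subseteq V(G)\setminus\beta_\S$ by~(i). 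Hence $D\subseteq A(S)$.

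Part (iii) is a direct computation. Expanding the definition and swapping sums,
\[
w_\S(\beta_\S)=\sum_{v\in\beta_\S}\Bigl(w(v)+\sum_{S\in\anchor_\S^{-1}(v)}w^*(A(S))\Bigr)=w(\beta_\S)+\sum_{S\in\S}w^*(A(S)),
\]
where the second equality uses $\anchor_\S(S)\in C(S)\subseteq\beta_\S$ from~(i), so each $S\in\S$ is counted exactly once. The sets $A(S_i)\setminus\bigcup_{j<i}A(S_j)$ partition $\bigcup_i A(S_i)=V(G)\setminus\beta_\S$, so the remaining sum equals $w(V(G)\setminus\beta_\S)$, and the total is $w(G)=1$.

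Finally, for (iv), fix $S\in\S$. Anticompleteness of $A(S)$ and $B(S)$ gives $N[A(S)]\subseteq A(S)\cup C(S)$, hence $w(N[A(S)])\leq w(A(S))+w(C(S))$. Since $w_\S(C(S))=w(C(S))+\sum_{S':\,\anchor_\S(S')\in C(S)}w^*(A(S'))$, the claim reduces to
\[
w(A(S))\leq\sum_{S':\,\anchor_\S(S')\in C(S)}w^*(A(S')).
\]
Writing $S=S_i$, I would decompose $A(S_i)$ along the first-occurrence classes $\tilde A_j=A(S_j)\setminus\bigcup_{k<j}A(S_k)$; the contribution $w(\tilde A_i)=w^*(A(S_i))$ is automatically covered because $\anchor_\S(S_i)\in C(S_i)$. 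The hard part is to verify, for every $j<i$ with $\tilde A_j\cap A(S_i)\neq\emptyset$, that $\anchor_\S(S_j)\in C(S_i)$. By~(i) I already know $\anchor_\S(S_j)\in C(S_j)\subseteq C(S_i)\cup B(S_i)$, so I only need to rule out $\anchor_\S(S_j)\in B(S_i)$. This is the main obstacle, and I expect it to follow from a compatibility property of the iterative construction of the anchor map in~\cite{wallpaper}: for instance, if each anchor is chosen from $N(A(S_j))\cap C(S_j)$, then a neighbour in $A(S_j)\cap A(S_i)$ pins the anchor inside $A(S_i)\cup C(S_i)$, and $\anchor_\S(S_j)\in\beta_\S$ rules out the $A(S_i)$ side, leaving $C(S_i)$. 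Once this compatibility is invoked, summing contributions over $j\leq i$ gives the desired inequality and completes~(iv).
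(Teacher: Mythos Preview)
The paper does not prove this lemma at all: it is quoted verbatim from Abrishami et al.~\cite{wallpaper} and used as a black box, so there is no ``paper's own proof'' to compare against. Your arguments for (i)--(iii) are correct and self-contained, and in fact supply more than the present paper does.

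For (iv), you have correctly located the genuine obstacle. As you observe, the inequality $w(A(S_i))\leq\sum_{S':\anchor_\S(S')\in C(S_i)}w^*(A(S'))$ is \emph{not} a consequence of the bare requirement $\anchor_\S(S)\in C(S)$ stated in this paper: with an arbitrary anchor map one can easily build counterexamples (take $\S=(S_1,S_2)$ with $A(S_1)\cap A(S_2)$ of positive weight and choose $\anchor_\S(S_1)\in C(S_1)\setminus C(S_2)$). So (iv) really does depend on a specific compatibility property of the anchor map imported from~\cite{wallpaper}, and you are right to flag that as an external input rather than something derivable here.

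One caution about your speculative mechanism: even if $\anchor_\S(S_j)\in N(A(S_j))\cap C(S_j)$, the neighbour witnessing this need not lie in $A(S_j)\cap A(S_i)$; it could lie in $A(S_j)\setminus A(S_i)$, in which case your ``pinning'' argument does not go through. So the precise compatibility condition you need from~\cite{wallpaper} is stronger than what you sketched --- roughly, whenever $A(S_j)\cap A(S_i)\neq\emptyset$ with $j<i$, the construction must place $\anchor_\S(S_j)$ in $C(S_i)$, and this has to come from how the sequence and the anchors are built simultaneously in~\cite{wallpaper}, not from a local adjacency argument.
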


If $S$ is a separation and $H$ is an induced subgraph of $G$, we define $S \cap H$ as the separation of $H$ given by $(A(S) \cap H, C(S) \cap H, B(S) \cap H)$. If $\S$ is a sequence of separations and $H$ is an induced subgraph of $G$, we define $\S \cap H$ as $\{S \cap H \mid S \in \S\}$.

The {\em dimension} of a sequence $\S$ of separations is the minimum number of laminar sequences with union $\S$. A sequence $\S$ of separations is {\em $(a, t)$-good} if $|\anchor_\S^{-1}(v)| \leq a$ for all $v \in V(G)$ and $C(S)$ has diameter at most $t$ for all $S \in \S$. 

In this paper, we are interested in $S_{t,t,t}$-free graphs with bounded degree. 
Let $t_1 \geq 0$ and $t_2, t_3 \geq 1$. Recall that the graph $S_{t_1, t_2, t_3}$ consists of a vertex $v$ and three paths $P^1, P^2, P^3$, with one end $v$, such that $P^1 \setminus \{v\}$, $P^2 \setminus \{v\}$, and $P^3 \setminus \{v\}$ are pairwise disjoint and anticomplete to each other, and $|P^i| = t_i+1$. The vertex $v$ is called the {\em root} of $S_{t_1, t_2, t_3}$. 

The following theorem summarizes the application of central bags to $S_{t,t,t}$-free graphs.   




\begin{theorem}[Abrishami et al.~\cite{wallpaper}]
\label{thm:claw-free-bag}
Let $c \in [\frac{1}{2}, 1)$ and let $\Delta, t, d$ be positive integers with $d \geq (3t+1)\Delta(1 + \Delta + \hdots + \Delta^t)^{3t(\Delta^{7t+4}+1)}$. Let $G$ be a connected $S_{t, t, t}$-free graph with maximum degree $\Delta$ and no $(w, c)$-balanced separator of size at most $d$, where $w(v) = \frac{1}{|V(G)|}$ for all $v \in V(G)$. Then, we can find in polynomial time a sequence $\beta_0, \beta_1, \hdots, \beta_k, \beta_{k+1}$ of graphs, where $\beta_{k+1} \subseteq \beta_k \subseteq \beta_{k-1} \subseteq \hdots \subseteq \beta_0 = G$, such that the following hold: 
\begin{itemize}
\itemsep0em
    \item $k \leq 3t(\Delta^{7t+4}+1)$;
    
    \item $\beta_i$ is connected for all $0 \leq i \leq k+1$;
    
    \item For all $1 \leq i \leq k+1$, there exists a weight function $w_i$ on $\beta_i$, with $w_i(\beta_i) = 1$;
    
    \item For all $1 \leq i \leq k+1$, there is a sequence of separations $\S_i$ of $\beta_{i-1}$, such that:
    \begin{itemize} 
    
        \item $\S_i$ is $A$-loosely laminar; 
        \item $w_{i-1}(A(S)) < 1-c$ for all $S \in \S_i$; 
        
        \item $\beta_i$ is the central bag for $\S_i \cap \beta_{i-1}$ and $w_i$ is the weight function on $\beta_i$; 
        
        \item $C(S)$ has diameter at most $2(t+1)$ and size at most $(3t+1)\Delta$ for all $S \in \S_i$; and
        
        
        
        \item For every vertex $v \in V(G)$, the set $\{S \in \S_i \mid v \in C(S)\}$ has size at most $2^{\Delta}$.
        
    \end{itemize}

        
        
        
    
    \item For all $0 \leq i \leq k+1$, it holds that $\beta_{i}$ has no $(w_{i}, c)$-balanced separator of size at most $d(1 + \Delta + \hdots + \Delta^t)^{-i}$; 
    
        \item $\beta_{k+1}$ is claw-free and has no clique cutset.

\end{itemize}
\end{theorem}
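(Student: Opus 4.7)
The plan is to prove the theorem by induction on $i$, constructing the sequence $\beta_0, \beta_1, \ldots$ one layer at a time and terminating at the first index $k+1$ where $\beta_{k+1}$ is claw-free and clique-cutset-free. The base case is $\beta_0 = G$ with the uniform weight $w_0(v) = 1/|V(G)|$, which has no $(w_0, c)$-balanced separator of size at most $d$ by hypothesis. For the inductive step, I assume $\beta_i$ has been built together with a weight function $w_i$ satisfying $w_i(\beta_i) = 1$ and no $(w_i, c)$-balanced separator of size at most $d_i := d\,(1+\Delta+\cdots+\Delta^t)^{-i}$, and I describe how to produce $\S_{i+1}$ and $\beta_{i+1}$.

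The core building block exploits $S_{t,t,t}$-freeness together with bounded degree. If $\beta_i$ contains a claw centered at $v$ with leaves $\ell_1,\ell_2,\ell_3$, then the ball $N^{t}[v]$ has at most $1+\Delta+\cdots+\Delta^t$ vertices, and some ``extension'' of the claw via short induced paths must fail, since otherwise we would find an induced $S_{t,t,t}$. Running BFS of depth $t$ from each leaf and collecting the boundary vertices produces a candidate cutset $C \subseteq N^{2(t+1)}[v]$ of size at most $(3t+1)\Delta$ and diameter at most $2(t+1)$, yielding a separation $S = (A,C,B)$ that peels off the side inaccessible to the other two leaves. Because $\beta_i$ has no $(w_i,c)$-balanced separator of size at most $d_i$, I may fix the orientation of $S$ so that $w_i(A) < 1-c$, and I choose $\anchor(S) := v$.

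From the pool of candidate separations produced one per claw, I extract an $A$-loosely laminar sub-sequence $\S_{i+1}$ by a standard uncrossing argument (whenever two candidate separations cross, I refine or discard the one with the smaller $A$-side). The multiplicity bound $|\anchor_{\S_{i+1}}^{-1}(v)| \leq 2^\Delta$ is enforced by observing that each anchor $v$ yields at most one separation per ``neighborhood pattern,'' of which there are at most $2^\Delta$. I then set $\beta_{i+1}$ to be the central bag of $\S_{i+1}$ and invoke Lemma~\ref{lemma:central_bags} to equip it with $w_{i+1}$ satisfying $w_{i+1}(\beta_{i+1}) = 1$ and $w_i(N[A(S)]) \leq w_{i+1}(C(S))$. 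The decay $d_{i+1} = d_i/(1+\Delta+\cdots+\Delta^t)$ of the no-separator threshold is proved by a pull-back argument: any small $(w_{i+1},c)$-balanced separator in $\beta_{i+1}$, augmented by all $C$-sides of separations whose $A$-side is attached to it, gives a $(w_i,c)$-balanced separator in $\beta_i$ of size blown up by at most the factor $1+\Delta+\cdots+\Delta^t$ (each anchor has neighborhood of size at most $\Delta$, and the bounded-degree ball around it has the stated size).

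The main obstacle, and the delicate heart of the proof, is bounding the number of rounds by $k \leq 3t(\Delta^{7t+4}+1)$. A naive ``count the claws destroyed'' potential fails because contracting to the central bag can create new claws. The right potential is a layered measure on configurations of length-$t$ induced paths attached to a bounded-radius ball, which together with the $(3t+1)\Delta$-bound on cutset sizes and the bounded-degree structure of $\beta_i$ decreases by at least a $\Delta^{-(7t+4)}$ factor per round; the exponent $7t+4$ comes from tracking, for each anchor, the three BFS-trees of depth $\approx t$ together with their boundary patterns. Once the iteration stabilizes at a claw-free graph, a final clique-cutset decomposition (polynomial-time, and preserving claw-freeness and the balanced-separator parameter up to the constants already absorbed) delivers $\beta_{k+1}$ with no clique cutset. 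Maintaining the $A$-loose laminarity, the anchor-map bookkeeping, and the weight-function identities across rounds — so that Lemma~\ref{lemma:central_bags} remains applicable at each step — is the most technically demanding piece of the argument.
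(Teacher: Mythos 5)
The heart of your proposal—the bound on the number of rounds $k\leq 3t(\Delta^{7t+4}+1)$—is not actually proved, and the mechanism you sketch does not match what makes the argument work. You posit a ``layered measure on configurations of length-$t$ induced paths'' that decreases by a $\Delta^{-(7t+4)}$ factor each round, but you neither define this potential nor prove it decreases, and you correctly flag this as ``the delicate heart of the proof.'' The paper's argument has a very different shape: the $3t$ factor and the $\Delta^{7t+4}+1$ factor come from two separate sources. The iterations are organized into $3t$ \emph{passes}, where pass $j$ takes an $S_{t_1,t_2,t_3}$-free graph and returns an $S_{t_1+1,t_2,t_3}$-free graph (shortening one leg of the forbidden subdivided claw until all three have length one). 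Within a single pass, one builds the \emph{$X$-covering sequence} $\S_X$ with $X=S_{t_1,t_2,t_3}$; since each vertex roots at most $\Delta^{3t}$ such claws and $C(S)$ has diameter at most $2t+2$, this sequence is $(\Delta^{3t},2t+2)$-good, and Lemma~2.7 of the cited work bounds its \emph{dimension} (minimum number of laminar families whose union is $\S_X$) by $\Delta^{7t+4}+1$. That dimension bound, not a potential-function decay, is what gives $k_1 = \Delta^{7t+4}+1$ rounds per pass. Finally, the reason a pass terminates with an $S_{t_1+1,t_2,t_3}$-free graph is a specific argument you do not reproduce: after the pass, $\beta_{k_1}$ excludes $S_{t_1+1,t_2,t_3}+K_1$ as an induced subgraph (Lemmas~4.1 and~2.9 of the cited work), and then any copy $H$ of $S_{t_1+1,t_2,t_3}$ would satisfy $\beta_{k_1}\subseteq N[H]$, yielding a $(w,c)$-balanced separator of size $(3t+1)\Delta$—a contradiction. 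This ``exclude $X+K_1$, then use the balanced-separator hypothesis to exclude $X$ itself'' step is essential and absent from your plan.

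Two smaller points. Your claimed enforcement of the multiplicity bound by ``at most one separation per neighborhood pattern, of which there are at most $2^\Delta$'' addresses $|\anchor^{-1}(v)|$, whereas the theorem actually bounds $|\{S\in\S_i : v\in C(S)\}|$; these are different quantities, and in the paper's construction the stronger fact $C(S)\cap C(S')=\emptyset$ for distinct $S,S'\in\S_i$ holds, making the bound immediate for a different reason. Also, your ``standard uncrossing argument'' is doing a lot of work that in the paper is handled by the dimension decomposition (each round uses one laminar family from the decomposition, so no ad hoc uncrossing is needed), and it is not clear your uncrossing would preserve the anchor/cutset invariants the subsequent lemmas depend on.
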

\begin{proof}
 Let $t_1 \geq 0$ and let $t_2, t_3 \geq 1$. Let $G'$ be a connected $S_{t_1+2, t_2, t_3}$-free graph with maximum degree $\Delta$ and no $(w, c)$-balanced separator of size $d$. Let $t' = \max(t_1, t_2, t_3)$. For a graph $X$, let $\mathcal{X} = \{X' \subseteq V(G') \mid X' \text{ is isomorphic to $X$}\}$, and let $\S_X$ be the {\em $X$-covering sequence for $G'$}, as defined in \cite{wallpaper} before Lemma 2.7. In particular, $C(S) \subseteq N[H]$ for some $H$ isomorphic to $X$ in $G$.  
Let $\anchor_{\S_X}(S)$ be the root of $C(S)$ for every $S \in \S_X$. 

\sta{\label{bounded-dimension} Every vertex $v \in V(G')$ is the root of at most $\Delta^{3t'}$
claws $S_{t_1,t_2,t_3}$.}

Let $v \in V(G')$. Since $v$ has degree at most $\Delta$, there are at most ${\Delta \choose 3} \leq \Delta^3$ ways to choose three distinct neighbors $a, b, c$ of $v$. There are at most $\Delta^{t'-1}$ ways to construct paths of $G$ of length $t'-1$ with one endpoint $a$ (or $b$ or $c$). Therefore, there are at most $\Delta^{3 + 3(t'-1)} = \Delta^{3t'}$ claws of $G$ with root $v$ and paths of length at most $t'$. This proves \eqref{bounded-dimension}. \vsp 

By \eqref{bounded-dimension} and the fact that $S_{t_1, t_2, t_3}$ has diameter $2t'+2$, it follows that $\S_{S_{t_1, t_2, t_3}}$ is $(\Delta^{3t'}, 2t'+2)$-good. Therefore, by \cite[Lemma 2.7]{wallpaper}, $\dim(\S_{S_{t_1, t_2, t_3}}) \leq \Delta^{7t'+4} + 1$. 

Let $k_1 = \Delta^{7t'+4} + 1$. Now, from \cite[Lemma 2.5]{wallpaper}, we deduce that we can find in polynomial time a sequence $\S_1, \hdots, \S_{k_1}$ of $A$-loosely laminar separations and a sequence $G = \beta_0, \beta_1, \hdots, \beta_{k_1}$ of induced subgraphs of $G$, satisfying the following: 
\begin{itemize}
\itemsep0em
    \item $\beta_i$ is connected for $0 \leq i \leq k_1$;
    
    \item For all $1 \leq i \leq k_1$, there exists a weight function $w_i$ on $\beta_i$ with $w_i(\beta_i) = 1$;
    
    \item $\beta_i$ is the central bag for $\S_i \cap \beta_{i-1}$ for $1 \leq i \leq k_1$;
    
    \item  $\beta_i$ has no $(w_i, c)$-balanced separator of size at most $d(1 + \Delta + \hdots + \Delta^{t'})^{-i}$ for $1 \leq i \leq k_1$. 
    
    \item For all $S \in \S_i$, it holds that $C(S) \subseteq N[H]$ for some $H \subseteq \beta_i$ isomorphic to $S_{t_1, t_2, t_3}$. 
\end{itemize}
Further, by the proof of \cite[ Lemma 2.7]{wallpaper}, $C(S) \cap C(S') = \emptyset$ for all $S, S' \in \S_i$. Finally, by \cite[Lemmas~4.1 and~2.9]{wallpaper}, it holds that $\beta_{k_1}$ does not have $S_{t_1 + 1, t_2, t_3} + K_1$ as an induced subgraph. If there exists $H \subseteq \beta_{k_1}$ such that $H$ is an $S_{t_1 +1, t_2, t_3}$ in $\beta_{k_1}$, it follows that $\beta_{k_1} \subseteq N[H]$, so $\beta_{k_1}$ has a $(w_{k_1}, c)$-balanced separator of size $(3t'+1)\Delta$, a contradiction. Therefore, $\beta_{k_1}$ is $S_{t_1 + 1, t_2, t_3}$-free. 

By iteratively applying the procedure described thus far $3t'$ times, we obtain a sequence $\beta_0, \hdots, \beta_{k_1}, \beta_{k_1 + 1}, \hdots,$ $\beta_{k_1 + k_2}, \hdots, \beta_{k_1 + k_2 + \hdots + k_{3t'}}$ satisfying the statements of the theorem, such that $\beta_{k_1 + \hdots + k_{3t'}}$ is claw-free. Let $k = k_1 + \hdots + k_{3t'}$. Since $k_i \leq \Delta^{7t'+4} + 1$, it holds that $k \leq 3t'(\Delta^{7t'+4}+1)$.

Now, by \cite[Theorem~2.12]{wallpaper}, there exists a sequence of separations $\S_{\mathcal{C}}$ and an induced subgraph $\beta_{k+1}$ of $\beta_k$, such that $\beta_{k+1}$ is the central bag for $\S_{\mathcal{C}}$; $C(S)$ is a clique for every $S \in \S$; $\beta_{k+1}$ has no $(w_{k+1}, c)$-balanced separator of size at most $d(1 + \Delta + \hdots + \Delta^t)^{-(k+1)}$; and $\beta_{k+1}$ has no clique cutset. This completes the proof. 
\end{proof}

We use Theorem~\ref{thm:claw-free-bag} to prove several important structural results about connected $S_{t,t,t}$-free graphs with maximum degree $\Delta$ and no bounded balanced separator. We call $\beta_0, \beta_1, \hdots, \beta_{k+1}$ as in Theorem \ref{thm:claw-free-bag} the {\em central bag decomposition} of $G$. This concludes the proof of Step 1 from the outline of the proof in Section \ref{subsec:outline}.

\section{Strip decompositions of the central bag}  \label{sec:centralbag}
Given a graph $G$, a \textit{strip structure} of $G$ is a pair $(H,\eta)$, where $H$ is a graph possibly with loops or parallel edges, $|E(H)|\geq  2$, and $\eta$ is a map with domain the union of $E(H)$ and the set of all pairs $(e,v)$ where $e\in E(H),v\in V(H)$ and $e$ incident with $v$, satisfying the following conditions:
\begin{itemize}
\itemsep0em
\item for every edge $e\in E(H)$, we have $\eta(e)\subseteq V(G)$, and for every $v\in V(H)$ incident with $e$, we have $\eta(e,v)\subseteq \eta(e)$,
\item $\eta(e) \cap \eta(f) = \emptyset$ for all distinct $e, f\in E(H)$,
\item for all distinct $e,f\in E(H)$, $x\in \eta(e)$ and $y\in \eta(f)$ are adjacent in $G$ if and only if $e,f$ share an end-vertex $v$ in $H$, and $x\in \eta(e,v)$ and $y\in \eta(f,v)$.
\end{itemize}
 For $e \in E(H)$ with ends $u, v,$ an {\em $e$-rung of $\eta$} is an
induced path of $\eta(e)$ with vertices $p_1, \hdots , p_k$ in order, where for $1 \leq i \leq k$, $p_i \in \eta(e, u)$ if and only if
$i = 1$, and $p_i \in \eta(e, v)$ if and only if $i = k$.

If in addition
\begin{itemize}
\itemsep0em
\item the sets $\{\eta(e): e\in E(H)\}$ are non-empty and partition $V(G)$, and
\item for every $v\in V(H)$, the union of the sets $\eta(e,v)$ for all $e\in E(H)$ such that $v$ is incident with $e$ is a clique of $G$,
\end{itemize}
then $(H,\eta)$ is called an \textit{elementary strip structure} of $G$. 

Let $G$ be a graph and let $(H,\eta)$ be a strip structure of $G$. Let us extend the domain of $\eta$ by adding to it the union of $V(H)$ and the set of all triangles of $H$, as follows. For each vertex $v\in V(H)$, let $\eta(v)\subseteq V(G)$, and for each triangle $D$ of $H$ let $\eta(D)\subseteq V(G)$, satisfying the following:
\begin{itemize}
\itemsep0em
\item all the sets $\eta(e)\:(e\in E(H))$,$\;\eta(v)\;(v\in V(H))$ and $\eta(D)$ (for all triangles $D$ of $H$) are pairwise disjoint, and their union is $V(G)$,
\item for each $v\in V(H)$, if $x\in \eta(v)$ and $y\in V(G)\setminus \eta(v)$ are adjacent in $G$ then $y\in \eta(e,v)$ for some $e\in E(H)$ incident in $H$ with $v$,
\item for each triangle $D$ of $H$, if $x\in \eta(D)$ and $y\in V(G)\setminus \eta(D)$ are adjacent in $G$ then $y\in \eta(e,u)\cap \eta(e,v)$ for some distinct $u,v\in D$, where $e$ is the edge $uv$ of $H$.
\end{itemize}
Finally, for $e \in E(H)$, let $\Tilde{\eta}(e)$ be the set of all vertices of $\eta(e)$ that do not belong to any $e$-rung of $\eta(e)$. 
In this case we say that $(H, \eta)$ is an {\em extended strip decomposition of $G$}. Let $Z \subseteq V(G)$ and $W$ be the set of vertices of $H$ that have degree one in $H$. If 
\begin{itemize}
\itemsep0em
\item $|Z| = |W|$, and for each $z\in Z$ there is a vertex $v\in W$ such that $\eta(e,v) = \{z\}$, where $e$ is the (unique) edge of $H$ incident with $v$,
\end{itemize}
then we say that $(H, \eta)$ is an \emph{extended strip decomposition of $(G,Z)$}.

If $(A, C, B)$ is a separation of $G$, we call $|C|$ the {\em order} of the separation.
Let $W\subseteq V(H)$. We say that $(H,W)$ is a {\em frame} if
\begin{itemize}
\itemsep0em
\item $H$ is connected,
\item $|W| \ge 3$ and every vertex in $W$ has degree one in $H$,
\item for every separation $(A,C, B)$ of $H$ of order at most two with $W\subseteq B \cup C\ne V(H)$, we have that $|C| = 2$ and $A \cup C$ is a path between the two vertices of $C$.
\end{itemize}
Let $Z \subseteq V(G)$ and let $(H, \eta)$ be an extended strip decomposition of $(G,Z)$. Let $W$ denote the set of vertices of $H$ of degree one. We say that $(H, \eta)$ is \emph{semi-tame for $(G,Z)$} if
\begin{itemize}
\itemsep0em
\item $H$ has no vertices of degree two,
\item $(H,W)$ is a frame,
\item $\eta(e) \setminus \Tilde{\eta}(e)$ is non-empty for all $e \in E(H)$, and
\item for every $e \in E(H)$ and every $v \in V(H)$ incident with $e$, it holds that
$\eta(e,v) \cap \tilde{\eta}(e)= \emptyset$.
\end{itemize}

We say that $(H, \eta)$ is {\em tame for $(G, Z)$} if $(H, \eta)$ is semi-tame for $(G, Z)$ and if $\Tilde{\eta}(e)$ is empty for every $e \in E(H)$.

A set $Z$ is \emph{constricted} in a graph $G$ if $G$ does not contain an induced tree $T$ with $Z \subseteq V(T)$. The main result of~\cite{3-in-a-tree} asserts that if $G$ is a connected graph and $Z\subseteq V(G)$ with $|Z| \ge 2$, then $Z$ is constricted in $G$ if and only if for some graph $H$, $(G,Z)$ admits an extended strip decomposition $(H, \eta)$. The following is one of the steps in the proof this result~\cite[Section 5]{3-in-a-tree}:

\begin{theorem}[Chudnovsky, Seymour~\cite{3-in-a-tree}]
\label{thm:extendH}
Let $Z$ be a constricted set in $G'$ and let $G$ be an induced subgraph of $G'$ such that $Z \subseteq V(G)$. Suppose there is a tame extended strip decomposition $(H,\eta)$ of $(G,Z)$.
Then in polynomial time we can obtain a semi-tame extended strip decomposition $(H',\eta')$ of $(G',Z)$ such that a subdivision of $H$ is a subgraph of $H'$.
\end{theorem}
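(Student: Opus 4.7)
The plan is to add the vertices of $V(G') \setminus V(G)$ to $G$ one at a time and maintain, throughout the process, a semi-tame extended strip decomposition of $(G_i, Z)$, where $G_0 = G$ and $G_i$ is $G'[V(G_{i-1}) \cup \{v_i\}]$ for some choice of $v_i \in V(G') \setminus V(G_{i-1})$. The base case is immediate since a tame extended strip decomposition is a fortiori semi-tame, so $(H_0, \eta_0) := (H, \eta)$ works for $G_0 = G$; note that the ``tame'' hypothesis is only used once, at the start.

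First I would focus on a single inductive step: given a semi-tame extended strip decomposition $(H_{i-1}, \eta_{i-1})$ of $(G_{i-1}, Z)$ and a vertex $v = v_i$, produce a semi-tame extended strip decomposition $(H_i, \eta_i)$ of $(G_i, Z)$ such that $H_{i-1}$ appears as a subdivision inside $H_i$. The approach is a case analysis on how $N_{G_i}(v)$ intersects the atoms $\eta_{i-1}(e)$, $\eta_{i-1}(u)$, $\eta_{i-1}(D)$ of the current decomposition, constrained by the adjacency axioms of an extended strip decomposition. Depending on the pattern, one either (a) places $v$ inside an existing atom (as a non-rung vertex of an edge atom, or inside a vertex or triangle atom); (b) subdivides an existing edge $e \in E(H_{i-1})$ and places $v$ in one of the resulting edge atoms or in the newly created vertex atom; or (c) introduces a new edge, new pendant vertex, or new triangle in $H_{i-1}$ as the home of $v$. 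Option (b) is exactly what forces the conclusion of the theorem to state that $H$ appears only as a subdivision of $H'$, rather than as a subgraph.

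The key point justifying that some valid extension always exists is that $Z$ is constricted in $G'$, hence in every $G_i$. Indeed, the only way the case analysis could fail is if $N_{G_i}(v)$ hits atoms in a way that is incompatible with all of the rules above; in that situation one can stitch $v$ together with rungs, vertex atoms, and triangle atoms coming from $(H_{i-1}, \eta_{i-1})$ to exhibit an induced tree of $G_i$ containing $Z$, contradicting the hypothesis. This is the same mechanism that powers the Chudnovsky--Seymour proof of the three-in-a-tree theorem: the extended strip decomposition is exactly the certificate that no such induced tree exists, and each local extension is guided by that certificate.

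The main obstacle will be the case analysis itself: there are many neighborhood patterns to consider (involving rung interiors, the sets $\eta(e,u)$, vertex atoms, and triangle atoms), and for each one must check that the resulting $(H_i, \eta_i)$ is still an extended strip decomposition (adjacency axioms, and the partition property), that $H_{i-1}$ still embeds as a subdivision in $H_i$, that no degree-two vertex is introduced (or, if one is, that it is absorbed by a further subdivision of the appropriate edge so that the frame property of the semi-tame condition is preserved), and that the remaining semi-tame conditions, namely $\eta(e) \setminus \tilde{\eta}(e) \neq \emptyset$ and $\eta(e,v) \cap \tilde{\eta}(e) = \emptyset$, continue to hold. Once the single-vertex step is handled, converting the argument to a polynomial-time algorithm is routine: in each of the $|V(G') \setminus V(G)|$ iterations one checks finitely many local patterns and applies the corresponding modification.
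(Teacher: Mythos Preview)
The paper does not give its own proof of this theorem: it is stated as a black box, attributed to Chudnovsky and Seymour~\cite{3-in-a-tree} (specifically Section~5 of that paper). So there is no proof in the present paper to compare your proposal against.

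That said, your outline is broadly faithful to what Chudnovsky and Seymour actually do in~\cite{3-in-a-tree}: they too process vertices one at a time and maintain the decomposition through a detailed case analysis on the neighborhood pattern of the new vertex, with the constrictedness of $Z$ serving as the obstruction that rules out the ``bad'' patterns (those that would force an induced tree through $Z$). The observation that edge subdivisions of $H$ may be needed, and that this is precisely why the conclusion says a subdivision of $H$ is a subgraph of $H'$, is correct.

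Where your proposal falls short of an actual proof is exactly where you flag it: the case analysis is the entire content of the argument, and you have only named its shape. In~\cite{3-in-a-tree} this analysis spans many pages and is organized around a sequence of structural lemmas (about how a new vertex can attach to rungs and branch cliques, and how to repair the frame conditions after each modification). Your sketch does not engage with any of these details, nor with the subtler semi-tameness maintenance (e.g., ensuring $(H_i,W)$ remains a frame after the modification, which is not automatic when you subdivide or add edges). So as a proof \emph{proposal} it identifies the right strategy, but it is not yet a proof; to complete it you would essentially be reproducing Section~5 of~\cite{3-in-a-tree}.
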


\subsection{Atoms of extended strip decompositions}

Let $G$ be a graph and let $(H, \eta)$ be an extended strip decomposition of $G$. We define the following \emph{atoms} that correspond to $(H,\eta)$ as follows:
\begin{description}
\itemsep0em
\item[vertex atom:] for each vertex $v \in V(H)$, it is $A(v) := \eta(v)$,
\item[edge atom:] for each edge $e = uv \in E(H)$, it is $A(e) := \eta(uv) \setminus (\eta(uv,v) \cup \eta(uv,u))$,
\item[triangle atom:] for each triangle $D = uvw \in T(H)$, it is $A(D) := \eta(uvw)$.
\end{description}
For a vertex $v \in V(H)$, by $\pot(v)$, we denote the set $\bigcup_{uv \in E(H)} \eta(uv,v)$. Each set $\eta(uv,v)$ is called a \emph{segment} of $\pot(v)$. For each atom $A$, let us define its \emph{boundary}, denoted by $\boundary(A)$, as follows:
\begin{itemize}
\setlength{\itemindent}{-.2in}
\itemsep0em
\item If $A$ is a vertex atom associated with a vertex $v$, then $\boundary(A) := \pot(v)$. 
\item If $A$ is an edge atom associated with an edge $uv$, then $\boundary(A) := \pot(u) \cup \pot(v)$. 
\item If $A$ is a triangle atom associated with a triangle $uvw$, then $\boundary(A) := \pot(u) \cup \pot(v) \cup \pot(w)$.
\end{itemize}

%
%
%

\subsection{A tame extended strip decomposition with small atoms}

A breakthrough in the understanding of the structure of claw-free graphs is due to the seminal work of Chudnovsky and Seymour, where a decomposition theorem for claw-free graphs was given in a series of papers. In particular, the following can be derived from \cite[Theorem 7.2]{Claw5}.
\begin{theorem}[Chudnovsky, Seymour~\cite{Claw5}]
\label{thm:clawstructure2}
Let $G$ be a connected claw-free graph. Then, one of the following holds.
\begin{itemize}
\itemsep0em
\item We have $\alpha(G)\leq 3$.
\item $G$ is a fuzzy long circular interval graph.
\item $G$ admits an elementary strip structure $(H,\eta)$ such that for every $e\in E(H)$, either $\alpha(G[\eta(e)]) \leq 4$ or $G[\eta(e)]$ is a fuzzy long circular interval graph. Moreover, we can find this elementary strip structure in polynomial time.
\end{itemize}
\end{theorem}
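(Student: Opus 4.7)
The plan is to derive this statement from Theorem 7.2 of \cite{Claw5}, which is the master structure theorem for connected claw-free graphs. That theorem partitions the world of connected claw-free graphs into a short list of \emph{basic} classes (including line graphs of triangle-free graphs, fuzzy long circular interval graphs, graphs derived from the Schl\"afli and icosahedron constructions, antiprismatic and near-antiprismatic graphs, and a few others), together with the statement that any graph not in a basic class admits a suitable elementary strip decomposition whose strips are again claw-free and ``smaller'' in a well-defined sense. My goal is to show that, after suitable cleanup, exactly the trichotomy stated in Theorem~\ref{thm:clawstructure2} falls out.

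First I would dispose of the trivial outcomes. If $\alpha(G)\leq 3$, the first bullet holds immediately. Otherwise I would test whether $G$ itself belongs to the basic class of fuzzy long circular interval graphs (using the polynomial-time recognition algorithm from \cite{Claw5}); if yes, the second bullet holds. In the remaining case I invoke Theorem 7.2 of \cite{Claw5} to obtain an elementary strip structure $(H,\eta)$ of $G$. The strip structure is produced constructively in polynomial time, so the algorithmic part of the third bullet comes for free from the constructive recognition of the decomposition in \cite{Claw5}.

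The substantive work is then to argue that for every edge $e\in E(H)$, the strip $G[\eta(e)]$ is either a fuzzy long circular interval graph or satisfies $\alpha(G[\eta(e)])\leq 4$. The idea is to observe that each strip is itself claw-free, and, by the proof of the structure theorem in \cite{Claw5}, in fact lies in one of the basic classes. A direct case-check of the basic classes shows that every basic class other than the fuzzy long circular interval class has independence number bounded by a small constant (at most $4$ in the relevant formulations — this is essentially built into the definitions of the antiprismatic, near-antiprismatic, XX-trigraph and Schl\"afli-type constructions, and the line-graph-like strips are short enough to fit the bound once the decomposition has been iterated). The only class that carries unbounded independence number is the fuzzy long circular interval class, which yields the alternative in the third bullet.

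The main obstacle is bookkeeping: the original Chudnovsky–Seymour theorem has a considerably richer collection of basic classes than the clean dichotomy in Theorem~\ref{thm:clawstructure2}, so matching them up requires going through each basic class, verifying the $\alpha\leq 4$ bound on strips, and ruling out a strip that is, say, a truncated line graph piece with large independence number via a further elementary strip refinement. All of these refinements remain polynomial-time by the constructive content of \cite{Claw5}, and the total number of refinement steps is bounded by a constant depending only on the basic-class catalogue, so the final elementary strip structure is produced in polynomial time, completing the proof.
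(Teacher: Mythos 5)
The paper does not actually prove this statement: it presents it as a direct citation, pointing to Theorem~7.2 of \cite{Claw5} for the structural trichotomy and, crucially, to Theorem~6.8 of \cite{domination} for the polynomial-time constructibility of the elementary strip structure. Your reconstruction of the structural part is broadly on the right track (dispose of the bounded-$\alpha$ basic classes, keep fuzzy long circular interval graphs as the one basic class with unbounded $\alpha$, and recognize that the strips in the remaining decomposable case fall into strip classes that are either fuzzy long circular interval graphs or have independence number at most $4$). That is indeed the bookkeeping one must do to collapse the many basic classes of \cite{Claw5} into the clean trichotomy stated here.

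However, there is a genuine gap in your treatment of the algorithmic claim. You assert that the polynomial-time construction of $(H,\eta)$ ``comes for free from the constructive recognition of the decomposition in \cite{Claw5}.'' It does not. The Chudnovsky--Seymour structure theorem for claw-free graphs is a purely existential classification; its proof is not an algorithm, and turning it into a polynomial-time decomposition procedure is a separate and non-trivial piece of work, carried out by Hermelin, Mnich, van Leeuwen, and Woeginger (the \cite{domination} reference). Without citing or reproving that result, the third bullet's ``Moreover, we can find this elementary strip structure in polynomial time'' is unjustified. You should replace the hand-waving appeal to a constructive content of \cite{Claw5} with an explicit appeal to that algorithmic theorem.
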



The complexity in the last item follows from Theorem 6.8 in \cite{domination}. We do not give the precise definition of a ``fuzzy long circular interval graph'' here since all we need is the fact that fuzzy long circular interval graphs with maximum degree at most $\Delta$ have treewidth bounded by a constant multiple of $\Delta$ (see \cite{wallpaper} for a precise definition):

\begin{theorem}[Abrishami et al.~\cite{wallpaper}]
\label{thm:circtw}
If $G$ is a fuzzy long circular interval graph with maximum degree at most $\Delta$, then $\tw(G) \leq 4\Delta+3$.
\end{theorem}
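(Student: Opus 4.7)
The plan is to construct a path decomposition of $G$ of width at most $4\Delta + 3$ directly from the circular-interval representation that witnesses $G$ being a fuzzy long circular interval graph; this suffices since treewidth is bounded by pathwidth. Recall that $G$ comes equipped with an embedding of $V(G)$ into a circle $\Sigma$ together with a finite family $\mathcal{I}$ of arcs of $\Sigma$, such that, up to a controlled amount of fuzziness near the endpoints of arcs, two vertices are adjacent in $G$ if and only if some arc of $\mathcal{I}$ covers them both. For each point $p \in \Sigma$, let $C(p)$ denote the set of vertices of $G$ covered by some arc of $\mathcal{I}$ containing $p$.

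The first step is to bound $|C(p)|$ in terms of $\Delta$. In the non-fuzzy case $C(p)$ induces a clique of $G$, so every vertex in $C(p)$ has degree at least $|C(p)| - 1$, immediately yielding $|C(p)| \leq \Delta + 1$. In the fuzzy case, some edges inside $C(p)$ may be deleted, but only for pairs of vertices whose arcs share an endpoint, and the ``long'' assumption guarantees that the number of such fuzzy pairs involving a single vertex is bounded by a small constant. A careful accounting of these fuzzy pairs then yields $|C(p)| \leq 2\Delta + 2$.

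The second step is to fix a point $p_0 \in \Sigma$ and sweep a moving point $p$ around $\Sigma$ starting from $p_0$. For each position of $p$, I take the bag $B_p := C(p_0) \cup C(p)$. The set $C(p)$ changes only at the finitely many arc endpoints, so the sweep produces a finite sequence of bags arranged along a path, each of size at most $|C(p_0)| + |C(p)| \leq 4\Delta + 4$. The path decomposition axioms are then routine to verify: every vertex $v$ lies in every $B_p$ with $p$ inside $v$'s arc, giving a contiguous subpath of bags once the wraparound is absorbed by permanently including $C(p_0)$; every edge $uv$ is captured either by a shared arc (producing a common bag containing both endpoints) or by fuzziness at an arc endpoint, which always sits inside some $B_p$.

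The main technical obstacle is the fuzziness analysis in the first step: one must leverage the ``long'' assumption on arcs to ensure that the fuzziness pattern at any single vertex is $\Oh(1)$, so that the worst-case inflation from the clean circular-interval bound $\Delta + 1$ to the fuzzy bound remains linear in $\Delta$ rather than blowing up with $|V(G)|$. This local-fuzziness guarantee is precisely what produces the factor of $4$ (rather than $2$) in the final bound $4\Delta + 3$.
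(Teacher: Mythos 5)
Note first that the paper itself does not prove this theorem; it is imported as a black box from the companion paper of Abrishami et al.~\cite{wallpaper}, so there is no in-paper proof to compare against. Evaluating your proposal on its own terms, the sweep construction with bags $C(p_0)\cup C(p)$ is the right idea, but the key bound $|C(p)|\le 2\Delta+2$ is not established, and the reasoning offered for it contains a genuine error. Your claim that in the non-fuzzy case $C(p)$ induces a clique is false: $C(p)$ is the set of vertices lying in \emph{some} arc through $p$, and a vertex to the left of $p$ in one arc and a vertex to the right of $p$ in a different arc need not lie in any common arc, hence need not be adjacent. What is true is that $C(p)$ is the union of (at most) two cliques, namely the vertices on each side of $p$ within the union of arcs through $p$: for two vertices on the same side, the arc through $p$ that reaches the farther one also covers the nearer one. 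This is what yields $|C(p)| \le 2(\Delta+1)$ already in the non-fuzzy case; the factor $2$ there has nothing to do with fuzziness, contrary to what your accounting suggests.

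The second and larger gap is that the fuzziness analysis, which you yourself flag as the crux, is entirely absent: you assert without argument that the ``long'' condition makes the fuzziness at any one vertex $\Oh(1)$. But the long condition (no three of the arcs have union $\Sigma$) is a geometric constraint on the arcs, not on the fuzziness; its role here is to ensure that the set $\{p : v \in C(p)\}$ is a proper arc of $\Sigma$ for every vertex $v$, so that the sweep from $p_0$ actually produces a valid path decomposition with contiguous occurrence sets. You still need a precise argument that the fuzzy deletion of edges near arc endpoints does not grow the two ``sides'' of $C(p)$ beyond cliques of size $\Delta+1$, and none is given. Until that is supplied, the claimed bag size $4\Delta+4$, and hence the treewidth bound $4\Delta+3$, do not follow.
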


The following is an easy observation.

\begin{observation}
\label{obs:bdd_alpha_bdd_tw}
If $G$ is a graph with maximum degree $\Delta$, then $\tw(G) \leq |V(G)| \leq \alpha(G) (\Delta +1)$.
\end{observation}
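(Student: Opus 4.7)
The first inequality $\tw(G) \leq |V(G)|$ is immediate: a single bag containing all of $V(G)$ constitutes a valid tree decomposition (of width $|V(G)|-1$), so $\tw(G) \leq |V(G)|-1 \leq |V(G)|$.

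For the second inequality $|V(G)| \leq \alpha(G)(\Delta+1)$, the plan is to use the standard greedy (or maximality) argument. I would fix a maximum independent set $I \subseteq V(G)$, so $|I|=\alpha(G)$. By maximality of $I$, every vertex $u \in V(G) \setminus I$ must have at least one neighbor in $I$: otherwise $I \cup \{u\}$ would still be independent, contradicting maximality. Consequently, $V(G) \subseteq \bigcup_{v \in I} N[v]$. Since $G$ has maximum degree at most $\Delta$, each closed neighborhood $N[v]$ has size at most $\Delta+1$, and the union bound gives $|V(G)| \leq \sum_{v \in I} |N[v]| \leq |I|(\Delta+1) = \alpha(G)(\Delta+1)$, as desired.

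There is no real obstacle here; both bounds are classical. The only mild subtlety is whether one prefers to state $\tw(G) \leq |V(G)|-1$ and then weaken it, or just write the trivial bag argument directly; either way the chain of inequalities $\tw(G) \leq |V(G)| \leq \alpha(G)(\Delta+1)$ follows immediately.
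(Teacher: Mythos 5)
Your proof is correct and uses the standard argument for both inequalities. The paper states this as ``an easy observation'' and gives no proof at all, so there is nothing to diverge from; your greedy-covering argument ($V(G)\subseteq\bigcup_{v\in I}N[v]$ for a maximum independent set $I$, then the union bound) together with the trivial one-bag tree decomposition is exactly what the authors implicitly have in mind.
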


Theorem \ref{thm:clawstructure2}, Theorem \ref{thm:circtw}, and Observation \ref{obs:bdd_alpha_bdd_tw}, together imply the following corollary.

\begin{corollary}
\label{cor:claw_free_tw}
If $G$ is a connected claw-free graph with maximum degree at most $\Delta$, then either $\tw(G) \leq 4\Delta+3$, or $G$ admits an elementary strip structure $(H,\eta)$ such that for every $e\in E(H)$, we have $\tw(G[\eta(e)]) \leq 4\Delta+4$.
\end{corollary}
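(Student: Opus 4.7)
The proof will be a straightforward case analysis obtained by applying Theorem \ref{thm:clawstructure2} to the given connected claw-free graph $G$, and then bounding treewidth in each of the three resulting cases using Theorem \ref{thm:circtw} and Observation \ref{obs:bdd_alpha_bdd_tw}. The only point worth checking is that the numerical bounds line up to give precisely $4\Delta+3$ for $G$ itself and $4\Delta+4$ for the edge-pieces $G[\eta(e)]$.

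First I would apply Theorem \ref{thm:clawstructure2} to $G$, yielding three mutually exhaustive cases. In the first case, $\alpha(G) \leq 3$, and since $G$ has maximum degree at most $\Delta$, Observation \ref{obs:bdd_alpha_bdd_tw} gives $\tw(G) \leq 3(\Delta+1) = 3\Delta+3 \leq 4\Delta+3$. In the second case, $G$ itself is a fuzzy long circular interval graph with maximum degree at most $\Delta$, so Theorem \ref{thm:circtw} directly yields $\tw(G) \leq 4\Delta+3$. Both of these cases therefore produce the first alternative in the statement of the corollary.

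In the third case we obtain, in polynomial time, an elementary strip structure $(H,\eta)$ of $G$ such that for every $e \in E(H)$, either $\alpha(G[\eta(e)]) \leq 4$ or $G[\eta(e)]$ is a fuzzy long circular interval graph. Fix such an edge $e$. Because $G[\eta(e)]$ is an induced subgraph of $G$, its maximum degree is again at most $\Delta$. If $\alpha(G[\eta(e)]) \leq 4$, then Observation \ref{obs:bdd_alpha_bdd_tw} gives $\tw(G[\eta(e)]) \leq 4(\Delta+1) = 4\Delta+4$. Otherwise $G[\eta(e)]$ is a fuzzy long circular interval graph, and Theorem \ref{thm:circtw} gives $\tw(G[\eta(e)]) \leq 4\Delta+3 \leq 4\Delta+4$. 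Either way, the per-edge bound of the second alternative holds.

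There is no real obstacle to the argument; it is essentially arithmetic bookkeeping once the three theorems above are invoked. The only place one must be slightly careful is to note that the ``$(\Delta+1)$'' factor in Observation \ref{obs:bdd_alpha_bdd_tw} is what forces the small looseness in the bound (so that one cannot get $4\Delta+3$ uniformly in the third case), which is why the corollary states $4\Delta+4$ for the edge atoms rather than $4\Delta+3$.
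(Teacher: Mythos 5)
Your proof is correct and follows exactly the route the paper intends: the paper simply states that Theorem~\ref{thm:clawstructure2}, Theorem~\ref{thm:circtw}, and Observation~\ref{obs:bdd_alpha_bdd_tw} together imply the corollary, and your write-up is a faithful, case-by-case unpacking of that one-line proof, with the arithmetic correctly identifying why the bound for edge pieces is $4\Delta+4$ rather than $4\Delta+3$.
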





Let $G$ be a connected $n$-vertex $S_{t, t, t}$-free graph with maximum degree $\Delta$, and assume $G$ has no $(w, c)$-balanced separator of size at most $d$ for some $w: V(G) \to [0, 1]$ with $w(G) = 1$, $c \in [\frac{1}{2}, 1)$, and $d > (3t+1)\Delta(1 + \Delta + \hdots + \Delta^t)^{3t(\Delta^{7t+4}+1)}$. Let $\beta_0, \hdots, \beta_{k+1}$ be the central bag decomposition for $G$. For the remainder of this section, let $\beta = \beta_{k+1}$. By Theorem \ref{thm:claw-free-bag}, $\beta$ is connected, claw-free, and has no clique cutset. Since $G$ has no $(w, c)$-balanced separator of size at most $d$, it holds that $\tw(G) > 4\Delta + 3$, and so it follows from Corollary \ref{cor:claw_free_tw} that $\beta$ admits an elementary strip structure $(H', \eta')$ such that for every $e\in E(H')$, we have $\tw(\beta[\eta'(e)]) \leq 4\Delta+4$. Recall that for an edge $e = uv \in E(H')$, the edge atom $A(e)$ is defined as $A(e) = \eta'(e) \setminus (\eta'(e,v) \cup \eta'(e,u))$. Since $A(e) \subseteq \eta'(e)$, it follows that for every $e\in E(H')$, the graph $\beta[A(e)]$ has treewidth at most $4 \Delta + 4$.



\begin{lemma}
\label{lem:small_atoms}
Let $(H', \eta')$ be an elementary strip structure of $\beta$. Let $w$ be a weight function on $\beta$ such that $w(\beta) = 1$. For every $\sigma \in (0,1)$, if there exist an edge $e \in E(H')$ with $w(A(e)) \geq \sigma$, then there is a $(w, c)$-balanced separator of size $6\Delta + 7$ in $\beta$, where $c = \max\{\frac{1}{2}, 1-\sigma\}$.
\end{lemma}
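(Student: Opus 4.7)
The strategy is to exploit three structural facts in combination: first, the edge atom $A(e)$ sits inside a piece $\eta'(e)$ of bounded treewidth; second, the boundary of $A(e)$ in $\beta$ is contained in two cliques; third, once those cliques are deleted, $A(e)$ is completely disconnected from the rest of $\beta$. Together these let me balance $A(e)$ from the inside using its treewidth bound, and balance the complement ``for free'' from the assumption $w(A(e)) \geq \sigma$.

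Write $e = uv$. Since $A(e) \subseteq \eta'(e)$, the hypothesis on $(H',\eta')$ (via Corollary~\ref{cor:claw_free_tw}) gives $\tw(\beta[A(e)]) \leq 4\Delta + 4$. By the defining property of an elementary strip structure, each of $\pot(u) = \bigcup_{uf \in E(H')} \eta'(uf,u)$ and $\pot(v)$ is a clique in $\beta$, and since $\Delta(\beta) \leq \Delta$ this forces $|\pot(u)|, |\pot(v)| \leq \Delta + 1$. Set $Y := \pot(u) \cup \pot(v)$, so $|Y| \leq 2\Delta + 2$. Most importantly, the third bullet in the definition of a strip structure implies that any vertex of $A(e) = \eta'(e) \setminus (\eta'(e,u) \cup \eta'(e,v))$ lies in no set $\eta'(e,\cdot)$ and so has no neighbours in any $\eta'(f)$ with $f \neq e$; equivalently $N_\beta(A(e)) \subseteq \eta'(e,u) \cup \eta'(e,v) \subseteq Y$.

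Now I apply Lemma~\ref{lem:tw_to_bsp}(i) inside $\beta[A(e)]$ with the renormalized weight $w_A(v) := w(v)/w(A(e))$, choosing the balance threshold $c' := c/w(A(e))$; since $w(A(e)) \leq 1$ and $c \in [\tfrac{1}{2},1)$, we have $c' \in [\tfrac{1}{2},1)$ (in the degenerate case $w(A(e)) \leq c$ I simply take $X' := \emptyset$). This yields a set $X' \subseteq A(e)$ with $|X'| \leq 4\Delta + 5$ such that every connected component of $\beta[A(e)] \setminus X'$ has $w$-weight less than $c' \cdot w(A(e)) = c$. I then put $X := X' \cup Y$, with $|X| \leq (4\Delta+5) + (2\Delta+2) = 6\Delta + 7$.

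It remains to verify that $X$ is a $(w,c)$-balanced separator of $\beta$. Because $Y$ contains $N_\beta(A(e))$, each connected component $D$ of $\beta \setminus X$ is either contained in $A(e) \setminus X'$ or in $V(\beta) \setminus \eta'(e)$. In the first case the choice of $X'$ gives $w(D) < c$. In the second case $w(D) \leq w(V(\beta) \setminus \eta'(e)) \leq 1 - w(A(e)) \leq 1 - \sigma \leq c$, and in fact strict inequality holds either because $c = \tfrac{1}{2} < w(A(e))$ (when $\sigma > \tfrac{1}{2}$) or, in the remaining range, by picking the threshold $c'$ in Lemma~\ref{lem:tw_to_bsp} slightly above $c/w(A(e))$ within $[\tfrac{1}{2},1)$ to absorb the equality case. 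The only real content of the argument is the first paragraph's observation that $A(e)$ is both treewidth-bounded and cleanly separated from $\beta \setminus \eta'(e)$ by two small cliques; everything else is bookkeeping with the weights. I do not anticipate a serious obstacle beyond being careful about this boundary case between weak and strict inequality.
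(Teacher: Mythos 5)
Your proof is correct and follows essentially the same route as the paper: bound $\tw(\beta[A(e)])$ by $4\Delta+4$, apply Lemma~\ref{lem:tw_to_bsp} to a renormalized weight on $A(e)$, add $\pot(u)\cup\pot(v)$ to cut $A(e)$ off from the rest (using the strip-structure adjacency condition), and bound the outside by $1-\sigma$. The only (cosmetic) difference is the threshold fed to Lemma~\ref{lem:tw_to_bsp}: the paper uses $\frac{1}{2}$, which gives inside components weight at most $\frac{1}{2}\,w(A(e)) \leq \frac{1}{2} \leq c$ and sidesteps the degenerate case and strictness fuss you have to handle with $c/w(A(e))$.
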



\begin{proof}
Let $\sigma \in (0,1)$ and assume that there exist an edge $e = uv \in E(H')$ such that $w(A(e)) = \sigma^*$ for some $\sigma^* \geq \sigma$. Let us define a weight function $w_A$ on $A(e)$ as follows: for $v \in A(e)$, we set $w_A(v) = \frac{w(v)}{\sigma^*}$ (and so $w_A(A(e)) = 1$). Since $\tw(\beta[A(e)]) \leq 4 \Delta + 4$, by Lemma \ref{lem:tw_to_bsp}, there exists a $(w_A, \frac{1}{2})$-balanced separator of $A(e)$ of size $4\Delta + 5$. Let us call this separator $X$. Note that since for every $v \in V(H')$, the union of the sets $\eta'(e, v)$ for all $e \in E(H')$ with $v \in e$ is a clique of $\beta$, we have $|\pot(v)| \leq \Delta + 1$ for every $v \in V(H')$. We claim that $X' = X \cup \pot(u) \cup \pot(v)$ is a $(w, c)$-balanced separator of size $(6\Delta+7)$ in $\beta$, where $c = \max\{\frac{1}{2}, 1-\sigma\}$. We have $|X'| \leq |X| + |\pot(u)| + |\pot(v)| \leq 6\Delta+7$. Let $D$ be a component of $\beta \setminus X'$. Note that either $D \cap A(e) = \emptyset$ or $D \subseteq A(e)$. If $D \cap A(e) = \emptyset$, then since $w(A(e)) \geq \sigma$, we have $w(D) \leq 1-\sigma$. If $D \subseteq A(e)$, then since $D$ is also a component of $A(e) \setminus X$, we have $w_A(D) \leq \frac{1}{2}$, and therefore $w(D) \leq \frac{1}{2} \cdot \sigma ^* \leq \frac{1}{2}$. Thus, every component $D$ of $\beta \setminus X'$ satisfies $w(D) \leq c$, where $c = \max\{\frac{1}{2}, 1-\sigma\}$. Hence, $X'$ is a $(w, c)$-balanced separator of size $(6 \Delta + 7)$ in $\beta$.
\end{proof}

\begin{lemma}
\label{lemma:H_3_connected}
For every $\sigma \in (0,1)$, there exists an elementary strip structure $(H, \eta)$ of $\beta$ where $H$ is 3-connected, and for a weight function $w$ on $\beta$, we have $w(A(e)) < \sigma$ for every edge $e \in E(H)$.
\end{lemma}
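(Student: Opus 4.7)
The plan is to start from the elementary strip structure $(H',\eta')$ furnished by Corollary~\ref{cor:claw_free_tw} (applicable since $\beta$ is connected, claw-free, and has treewidth exceeding $4\Delta+3$ by Lemma~\ref{lem:tw_to_bsp} and the assumption that $\beta$ admits no small balanced separator). This initial structure already obeys the weight condition: applying Lemma~\ref{lem:small_atoms} contrapositively, if some $e \in E(H')$ had $w(A(e)) \geq \sigma$ then $\beta$ would admit a $(w,c)$-balanced separator of size $6\Delta+7$ with $c = \max\{\tfrac{1}{2},1-\sigma\}$, which is impossible as long as the parameter $d$ in the hypothesis is chosen large enough that $\beta$ has no $(w,c)$-balanced separator of size $6\Delta+7$. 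Hence $w(A(e))<\sigma$ for every $e \in E(H')$, and what remains is to massage $(H',\eta')$ into an elementary strip structure with the same small-atom property whose underlying graph is 3-connected.

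The graph $H'$ is automatically 2-connected: it is connected since $\beta$ is, and any cut vertex $v$ would turn the clique $\pot(v)$ into a clique cutset of $\beta$ (the portions of $\beta$ arising from distinct components of $H'-v$ are pairwise anticomplete modulo $\pot(v)$), contradicting the no-clique-cutset conclusion of Theorem~\ref{thm:claw-free-bag}. Suppose now that $\{u,v\}$ is a 2-cut of $H'$; let $C_1,\ldots,C_r$ (with $r\geq 2$) be the components of $H'-\{u,v\}$, set $H_i = H'[V(C_i)\cup\{u,v\}]$ and $V_i = \bigcup_{e \in E(H_i)} \eta'(e)$. The sets $V_i \setminus (\pot(u)\cup \pot(v))$ are pairwise anticomplete in $\beta$, so $\pot(u)\cup \pot(v)$ is a vertex cut of $\beta$ of size at most $2\Delta+2$. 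Since $\beta$ has no small balanced separator, some component of $\beta\setminus (\pot(u)\cup \pot(v))$ has weight exceeding $c$; that component lies in a unique $V_i$, say $V_1$, whence $w(V_2\cup\ldots\cup V_r)<1-c\leq \sigma$. I would then set $H'' = H_1 \cup \{e^*\}$ with $e^*$ a fresh (parallel if need be) $uv$-edge, and define $\eta''(e)=\eta'(e)$ for $e \in E(H_1)$, $\eta''(e^*)=V_2\cup\ldots\cup V_r$, $\eta''(e^*,u)=\pot(u)\cap \eta''(e^*)$, and $\eta''(e^*,v)=\pot(v)\cap \eta''(e^*)$.

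A careful verification shows that $(H'',\eta'')$ is an elementary strip structure of $\beta$: the $V_i$ partition $V(\beta)$, so the $\eta''$-sets do too; every vertex of $\pot(u)$ belongs to a unique $V_i$, which yields the clean partition $\pot(u)=(\pot(u)\cap V_1)\sqcup (\pot(u)\cap \eta''(e^*))$ and the clique property $\pot''(u)=\pot(u)$ (analogously at $v$, and at any other vertex $\pot''(w)\subseteq \pot'(w)$); and all cross-atom adjacencies in $\beta$ between $V_1$ and $V_2\cup\ldots\cup V_r$ necessarily pass through $\pot(u)\cup\pot(v)$ in $(H',\eta')$ and are therefore faithfully captured by the new segments of $e^*$ at the shared endpoints $u,v$. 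The new atom has $w(A(e^*))\leq w(\eta''(e^*))<\sigma$ and the other atoms are unchanged, and $|V(H'')|<|V(H')|$, so the iteration terminates. A cut vertex surfacing at any intermediate stage would again force a clique cutset of $\beta$ (since $\pot''(w)$ is always a subclique of $\pot'(w)$), so every iterate is 2-connected and termination yields a 3-connected $H$. The main obstacle is precisely this adjacency bookkeeping at the seam between $V_1$ and the absorbed light sides, which is the step that most directly uses the fact that $\eta'(e)\cap \eta'(f)=\emptyset$ for distinct $e,f$ and hence that each vertex of $\pot(u)\cup \pot(v)$ can be unambiguously assigned to one side of the 2-cut.
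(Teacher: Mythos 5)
Your proof takes essentially the same route as the paper: you obtain the small-atom property from the contrapositive of Lemma~\ref{lem:small_atoms}, derive 2-connectivity of the pattern graph from the no-clique-cutset property of $\beta$, and then iteratively collapse the light side of each 2-cut $\{u,v\}$ into a single new $uv$-edge whose atom inherits weight below $1-c\leq\sigma$, using $|V(H)|$ as the termination measure. The only cosmetic difference is that you partition $\beta$ by the components of $H'-\{u,v\}$ (via the sets $V_i$) rather than by the components of $\beta\setminus(\pot(u)\cup\pot(v))$ as the paper does, but the two formulations match up in substance, and your bookkeeping of how $\pot(u)\cup\pot(v)$ is split between the two sides is if anything a bit more explicit than the paper's.
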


\begin{proof}
Let $(H', \eta')$ be an elementary strip structure of $\beta$.
By Theorem \ref{thm:claw-free-bag}, for every $c \in [\frac{1}{2}, 1)$ and for some $d > 6\Delta + 7$, it holds that $\beta$ has no $(w_\S, c)$-balanced separator of size at most $d$. Therefore,  we may assume that $w(A(e)) < \sigma$ for every edge $e \in E(H')$ and for every $\sigma \in (0,1)$. Suppose $\{v\}$ is a cutset of size one in $H'$. Then, $\pot(v)$ is a clique cutset in $\beta$. Since $\beta$ has no clique cutset, it follows that $H'$ is 2-connected. 

Let $\{a, b\} \subseteq V(H')$ be a cutset in $H'$. Then, $\pot(a) \cup \pot(b)$ is a cutset in $\beta$. Let $c = \max\{\frac{1}{2}, 1-\sigma\}$ and let $w$ be a weight function on $\beta$. Since $|\pot(a) \cup \pot(b)| \leq 2 \Delta + 2$ and since $\beta$ has no bounded $(w,c)$-balanced separator, one of the components of $\beta \setminus (\pot(a) \cup \pot(b))$ has weight greater than $c$. Let $T$ be the vertex set of this component and let $S$ be the vertex set of the union of the other components of $\beta \setminus (\pot(a) \cup \pot(b))$. Then, if $ab \in E(H')$, then $T \neq \eta'(ab) \setminus (\eta'(ab, a) \cup \eta'(ab, b))$, and therefore $\eta'(ab) \setminus (\eta'(ab, a) \cup \eta'(ab, b))$ is contained in $S$. Let $H$ be the graph obtained from $H'$ by deleting the edges $e \in E(H')$ such that $\eta'(e) \cap S \neq \emptyset$ and the vertices $v \in V(H')$ such that $\pot(v) \cap S \neq \emptyset$,
and by adding a new edge $f$ with ends $a, b$. Let $\eta$ be the map obtained from $\eta'$ restricted to $T \cup \pot(a) \cup \pot(b)$ by setting 
\begin{itemize}
\itemsep0em
\item $\eta(f) \setminus (\eta(f, a) \cup \eta(f, b)) =S$,
\item $\eta(f, a) = \bigcup_{u \in S} \eta'(au, a)$, and
\item $\eta(f, b) = \bigcup_{u \in S} \eta'(bu, b)$.
\end{itemize}
Repeating this procedure until the pattern graph has no cutset of size 2 yields an elementary strip structure $(H, \eta)$ of $\beta$ where $H$ is 3-connected. Observe also that the atoms generated by this operation preserves having the property of ``small'' weight since $S$ is the union of the ``small'' weight components. Hence, we have $w(A(e)) < \sigma$ for every edge $e \in E(H)$.
\end{proof}

Let $m_1, m_2, m_3$ be three distinct branch vertices of $H$, and
for $i=1,2,3$, let $M_i \subseteq V(\beta)$ be the branch clique that corresponds to $m_i$. 
Let $G'$ be the graph obtained from $G$ by adding three new vertices $v_1, v_2, v_3$ with $N_{G'}(v_i) = M_i$. Since $G$ has bounded degree, it holds that $G'$ also has bounded degree. Let $\beta'' = \beta \cup \{v_1, v_2, v_3\}$ and let
$$Y = \{v: v \in V(G') \setminus \beta'' \text{ s.t. dist$_{G'}(v, \{v_1, v_2, v_3\}) \leq t+1$} \}.$$ 
Observe that $|Y| \leq 3 \sum_{i=1}^{t+1} \Delta^i$. Let $G'' = G' \setminus Y$. \

\bigskip

\begin{lemma}
\label{lem:v1_v2_v3_constricted}
The set $\{v_1, v_2, v_3\}$ is constricted in $G''$.
\end{lemma}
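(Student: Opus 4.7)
I will argue by contradiction: suppose $G''$ admits an induced tree $T$ containing $\{v_1,v_2,v_3\}$, and pick $T$ of minimum size. Each $v_i$ is simplicial in $G''$, since $N_{G''}(v_i) = M_i$ is a clique of $\beta\subseteq G''$; and a simplicial vertex has degree at most one in any induced tree, because two of its neighbors would be adjacent and together with it form a triangle in $T$. Hence each $v_i$ is a leaf of $T$, and by minimality $T$ must be a subdivided claw with leaves $v_1,v_2,v_3$, some center $c\notin\{v_1,v_2,v_3\}$, and three internally disjoint $c$-to-$v_i$ paths $P_i$ of lengths $\ell_i\geq 1$. Moreover, since $M_1,M_2,M_3$ are pairwise disjoint, at most one $\ell_i$ can equal $1$ (two would place $c$ in two distinct disjoint branch cliques).

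The heart of the argument is a length dichotomy. If every $\ell_i\geq t+1$, then $T \setminus \{v_1, v_2, v_3\}$ lies in $V(G'')\setminus\{v_1,v_2,v_3\} = V(G)\setminus Y \subseteq V(G)$ and forms a subdivided claw centered at $c$ with three pairwise internally anticomplete legs of length $\ell_i-1\geq t$; truncating each leg to exactly $t$ edges yields an induced $S_{t,t,t}$ in $G$, contradicting $S_{t,t,t}$-freeness. Otherwise some $\ell_j\leq t$; then $c$ is at distance at most $t+1$ from $v_j$ in $G'$, and since $c\notin\{v_1,v_2,v_3\}$, the definition of $Y$ forces $c\in V(\beta)$. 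Letting $a_i$ denote the neighbor of $c$ on $P_i$, the same distance estimate gives $a_i\in V(\beta)$ whenever $2\leq\ell_i\leq t+2$. In the favorable subcase where every $\ell_i\in[2,t+2]$, the three vertices $a_1,a_2,a_3$ are pairwise non-adjacent $\beta$-neighbors of $c$ (pairwise non-adjacency inherited from $T$ being induced), so $\{c,a_1,a_2,a_3\}$ induces a claw in $\beta$, contradicting claw-freeness of $\beta$.

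The main obstacle is to resolve the remaining mixed subcases: the degenerate $\ell_i=1$ branch (then $c\in M_i$, and only two of the $a$'s are candidate claw-leaves) and the branch where some $\ell_i\geq t+3$ (then $a_i$ might not lie in $V(\beta)$). I plan to handle both through the elementary strip structure $(H,\eta)$ of $\beta$ supplied by Lemma~\ref{lemma:H_3_connected}. For the long-leg subcase I would walk along $P_i$ from $c$ toward $v_i$ until the path first re-enters $V(\beta)$ (which must happen by the time it reaches the vertex $y_i\in M_i$ adjacent to $v_i$) and substitute the first such $\beta$-vertex in place of $a_i$, verifying that the required pairwise non-adjacencies survive since $T$ is induced. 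For the $\ell_i=1$ subcase the containment $c\in M_i$, combined with the 3-connectedness of $H$ and the clique property of $M_i$, allows me to locate a third $\beta$-neighbor of $c$ in a strip distinct from those containing $a_j$ and $a_k$, which then completes the claw at $c$ in $\beta$.
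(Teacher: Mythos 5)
Your approach is broadly the same as the paper's (minimality forces the tree $T$ to be a subdivided claw since the $v_i$ are simplicial; then use claw-freeness plus the definition of $Y$ to reach $S_{t,t,t}$), but the paper's version is substantially cleaner. The paper observes directly that $\{a,b_1,b_2,b_3\}$ (center plus its three tree-neighbors) is an induced claw in $G''$, and since $\beta''=\beta\cup\{v_1,v_2,v_3\}$ is claw-free, at least one of these four vertices lies outside $\beta''$; being in $G''=G'\setminus Y$, that vertex is at distance at least $t+2$ from every $v_i$, which at once forces all three legs to have length at least $t+1$, giving $S_{t,t,t}$ in $G\setminus Y$. This avoids your case split on leg lengths entirely.

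Two concrete problems with your version. First, your ``long-leg'' subcase ($\ell_i\geq t+3$) is a red herring, and your proposed fix for it is unsound. If some $\ell_j\leq t$ then \emph{every} $a_i$ with $\ell_i\geq 2$ lies in $\beta$, because the path $a_i\,\text{--}\,c\,\text{--}\,P_j$ shows $\dist_{G'}(a_i,v_j)\leq \ell_j+1\leq t+1$; the bound $\ell_i\leq t+2$ you impose (by measuring along $P_i$ only) is unnecessary. Your ``walk along $P_i$ until it re-enters $\beta$'' repair does not yield a claw anyway, since the vertex you find need not be adjacent to $c$. Second, the $\ell_i=1$ subcase (where $a_i=v_i$ and $c\in M_i$) is the genuine remaining obstacle, and your treatment of it is only a plan, not a proof. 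What you are effectively trying to re-derive there is the claw-freeness of $\beta''$, which is exactly the structural fact the paper invokes wholesale; your sketch of ``locate a third $\beta$-neighbor of $c$ in a strip distinct from those of $a_j,a_k$'' requires nontrivial analysis of how $c$ and the $a$'s sit inside the elementary strip structure and is not carried out. Finally, your assertion that $M_1,M_2,M_3$ are pairwise disjoint is not justified: the $m_i$ are merely three distinct branch vertices of $H$, and nothing prevents two branch cliques from sharing a vertex; in any case your argument should not depend on this.
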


\begin{proof}
Suppose  $T$ is an induced tree in $G''$ containing $v_1, v_2, v_3$, and with $V(T)$ minimal. Then, either $T$ is a path with both end-vertices in $\{v_1, v_2, v_3\}$, or $T$ is a subdivided claw and $v_1, v_2, v_3$ all have degree one in $T$. Since $v_1, v_2, v_3$ are simplicial vertices of $G''$, it follows that $T$ is a subdivided claw. Let $a$ be the unique vertex of degree three in $T$, and let $T$ consist of paths $T_1, T_2, T_3$ where, for $i=1,2,3$, $T_i$ is a path from $a$ to $v_i$. Let $b_i$ be the neighbor of $a$ in $T_i$. Since $\beta''$ is claw-free, we have $\{a, b_1, b_2, b_3\} \setminus \beta'' \neq \emptyset$. Let $u \in \{a, b_1, b_2, b_3\} \setminus \beta''$. Since $V(G'') \cap Y = \emptyset$, it follows that dist$_{G'}(u, v_i) \geq t+2$ for $i=1,2,3$. But then $T \setminus \{v_1, v_2, v_3\}$ contains an induced $S_{t,t,t}$, and so $G \setminus Y$ contains an induced $S_{t,t,t}$, a contradiction.
\end{proof}

Note that $\beta''$ is an induced subgraph of $G''$ and it is claw-free. Let $H''$ be the graph obtained from $H$ by adding three new vertices $l_1, l_2, l_3$ where for $i=1,2,3$, the unique neighbor of $l_i$ in $H''$ is $m_i$. Let $\eta''$ be the map obtained from $\eta$ by additionally setting $$\eta''(l_i m_i) = \eta'' (l_im_i, l_i) = \eta''(l_im_i, m_i) = \{v_i\}$$ for $i=1,2,3$. It is now straightforward to see that $(H'', \eta'')$ is a semi-tame extended strip decomposition of $(\beta'', \{v_1, v_2, v_3\})$.

We summarize the results of this section in the following theorem.

\begin{theorem}
\label{thm:esd-summary}
Let $G$ be an $n$-vertex $S_{t,t,t}$-free connected graph with maximum degree at most $\Delta$ and no $(w, c)$-balanced separator of size $d$, where $w(v) = \frac{1}{n}$ for all $v \in V(G)$. Let $\beta_0, \hdots, \beta_{k+1}$ be the central bag decomposition of $G$, and let $w_{k+1}$ be the weight function on $\beta_{k+1}$.
\begin{enumerate}
\itemsep0em
\item For every $\sigma \in (0,1)$, there exists an elementary strip structure $(H, \eta)$ of $\beta_{k+1}$ where $H$ is 3-connected and $w_{k+1}(A(e)) < \sigma$ for every edge $e \in E(H)$.
\item Let $m_1, m_2, m_3$ be three distinct branch vertices of $H$, and for $i=1,2,3$, let $M_i \subseteq V(\beta)$ be the branch clique that corresponds to $m_i$. Let $G'$ be the graph obtained from $G$ by adding three new vertices $v_1, v_2, v_3$ with $N_{G'}(v_i) = M_i$. Let $\beta'' = \beta_{k+1} \cup \{v_1, v_2, v_3\}$ and let
\[
Y = \{v: v \in V(G') \setminus \beta'' \text{ s.t. } \dist_{G'}(v, \{v_1, v_2, v_3\}) \leq t+1 \}.
\]
Let $G'' = G' \setminus Y$. The set $\{v_1, v_2, v_3\}$ is constricted in $G''$, and in polynomial time, we can find a semi-tame extended strip decomposition $(H'', \eta'')$ of $(\beta'', \{v_1, v_2, v_3\})$ such that $w_{k+1}(A(e)) < \sigma$ for every edge $e \in E(H'')$.
\end{enumerate}
\end{theorem}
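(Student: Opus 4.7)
The plan is to assemble Theorem~\ref{thm:esd-summary} from the ingredients already prepared in this section. Part~(1) is essentially Lemma~\ref{lemma:H_3_connected} applied to $\beta_{k+1}$ with the weight function $w_{k+1}$; the only subtlety is to confirm the lemma's hypothesis, namely the absence of a bounded-size $(w_{k+1},c)$-balanced separator in $\beta_{k+1}$. This follows from the last bullet of Theorem~\ref{thm:claw-free-bag}, as long as the constant $d$ in the hypothesis of Theorem~\ref{thm:esd-summary} is taken sufficiently large to absorb the factor $(1+\Delta+\dots+\Delta^t)^{-(k+1)}$ and still exceed the threshold $6\Delta+7$ needed by Lemma~\ref{lem:small_atoms}. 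The initial elementary strip structure that feeds Lemma~\ref{lemma:H_3_connected} comes from Corollary~\ref{cor:claw_free_tw}, whose construction is polynomial-time by Theorem~\ref{thm:clawstructure2}.

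For part~(2), I would take the structure $(H,\eta)$ given by part~(1), fix three distinct branch vertices $m_1,m_2,m_3$ of $H$, and construct $G'$, $\beta''$, $Y$, and $G''$ exactly as in the statement. Constrictedness of $\{v_1,v_2,v_3\}$ in $G''$ is then Lemma~\ref{lem:v1_v2_v3_constricted}. Next, I define $H''$ by attaching three pendants $l_1,l_2,l_3$ adjacent to $m_1,m_2,m_3$, and extend $\eta$ to $\eta''$ by setting
\[
\eta''(l_im_i) \;=\; \eta''(l_im_i,l_i) \;=\; \eta''(l_im_i,m_i) \;=\; \{v_i\}
\]
for $i=1,2,3$. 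The three new edge atoms $A(l_im_i)$ are empty, so the bound $w_{k+1}(A(e))<\sigma$ is preserved on them vacuously, while on the original edges it is inherited from part~(1).

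The main work, and the step on which I would spend the most care, is verifying that $(H'',\eta'')$ is \emph{semi-tame} for $(\beta'',\{v_1,v_2,v_3\})$. This unpacks into four conditions: (a) $H''$ has no vertex of degree two; (b) $(H'',\{l_1,l_2,l_3\})$ is a frame; (c) $\eta''(e)\setminus\tilde{\eta}''(e)$ is non-empty for every edge $e$; and (d) $\eta''(e,v)\cap\tilde{\eta}''(e)=\emptyset$ for every endpoint $v$ of every edge $e$. Items~(a), (c), (d) are immediate: the 3-connectivity of $H$ guarantees that every original vertex already has degree at least three in $H''$, while the new leaves $l_i$ have degree one, and the new edges $l_im_i$ satisfy (c) and (d) trivially because their atoms are empty. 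Item~(b) is the delicate point: any separation of $H''$ of order at most two with $\{l_1,l_2,l_3\}\subseteq B\cup C\ne V(H'')$ restricts to a separation of the 3-connected graph $H$, and a short case analysis depending on whether vertices of $\{l_i\}$ or $\{m_i\}$ lie in $C$ forces $|C|=2$ and $A\cup C$ to be a path, as required by the definition of a frame. The polynomial running time claim is inherited from Corollary~\ref{cor:claw_free_tw} together with the constructive nature of all remaining steps.
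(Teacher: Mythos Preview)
Your proposal is correct and follows essentially the same route as the paper: Theorem~\ref{thm:esd-summary} is not given a standalone proof there but is explicitly a summary of Section~\ref{sec:centralbag}, and you have correctly identified and assembled the ingredients (Lemma~\ref{lemma:H_3_connected} for part~(1), Lemma~\ref{lem:v1_v2_v3_constricted} plus the pendant construction for part~(2)). You actually give more detail on the semi-tameness verification than the paper does, which simply declares it ``straightforward.'' One small point: your justification of conditions~(c) and~(d) only treats the new edges $l_im_i$ explicitly; for the original edges of $H$ these conditions rely on properties of the elementary strip structure coming from the claw-free decomposition of~\cite{Claw5} (and its reshaping in Lemma~\ref{lemma:H_3_connected}), which neither you nor the paper spells out, so this is not a divergence from the paper's argument.
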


\section{Extending extended strip decompositions} \label{sec:extending}

In this section, we prove an essential inductive lemma stating that we can grow an extended strip decomposition of a central bag of $G$ to an extended strip decomposition of $G$, while preserving a bound on the weight of each atom of the strip decompositions. Theorem~\ref{thm:esd-summary} and this lemma are the key components of the proof of Theorem~\ref{thm:combinatorial}. 

\begin{lemma}
\label{lemma:esd-induction}
Fix integers $\Delta \geq 3$, $t, \rho \geq 1$, and $d \geq (9/2\cdot \Delta^2)^{\Delta(2t+2)} + (1 + (3t+1)\Delta 2^{\Delta})\rho$, and real number $c \in [\frac{1}{2}, 1)$. Let $G$ be a connected graph with maximum degree $\Delta$ and let $w'$ be a weight function on $V(G)$ with $w'(G) = 1$. Assume $G$ has no $(w', c)$-balanced separator of size at most $d$. Let $\beta \subseteq G$ be a connected set and let $Z \subseteq \beta$ be of size 3. Let $\S$ be an $A$-loosely laminar sequence of separations of $G$ satisfying the following conditions:
\begin{itemize} 
\itemsep0em
\item For every vertex $v \in V(G)$, the set $\{S \in \S \mid v \in C(S)\}$ has size at most $2^{\Delta}$; 
\item $|C(S)| \leq (3t+1)\Delta$ for every $S \in \S$; and 

\item The diameter of $C(S)$ is at most $2t+2$ for every $S \in \S$. 
\end{itemize} 
Let $\beta'$ be the central bag for $\S$, let $w$ be the weight function on $\beta'$, and assume that $\beta \subseteq \beta'$. Let $Y \subseteq G \setminus \beta$ be a set of size at most $\rho$ such that $Z$ is constricted in $G - Y$. Let $M'$ (resp. $M$) be the connected component of $G - Y$ (resp. $\beta' - Y)$ containing $\beta$. Finally, assume that for each component $D$ of $G - Y$, except for $M'$, it holds that $w(D) < 1-c$. 

Suppose we are given a semi-tame extended strip decomposition $(H, \eta)$ of $(M, Z)$, whose every atom is of weight at most $1-c$ (under $w$). Then in polynomial time we can compute a semi-tame extended strip decomposition $(H', \eta')$ of $(M', Z)$, where $H$ is a subgraph of $H'$ and every atom is of weight at most $1-c$ (under $w')$. 
\end{lemma}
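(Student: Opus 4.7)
The plan is to extend $(H,\eta)$ by iteratively incorporating the vertices of $M' \setminus M$. By Lemma~\ref{lemma:central_bags}(ii), every component of $G \setminus \beta'$ is contained in $A(S)$ for some $S \in \S$; in particular each vertex of $M' \setminus M$ lies in some $A(S) \setminus Y$, and each such $A(S)$ is anticomplete in $G$ to $V(G) \setminus (A(S) \cup C(S))$, so it is attached to the rest of $M'$ only through $C(S)$. Since $|C(S)| \leq (3t+1)\Delta$ and $C(S)$ has diameter at most $2t+2$, each pocket $A(S) \cap M'$ has bounded-complexity interaction with $M$ and can in principle be grafted into the existing decomposition in a controlled way.

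To carry out the grafting we use Theorem~\ref{thm:extendH}. That theorem requires a tame input, so we first perform a local surgery on $(H,\eta)$ to empty out each set $\tilde\eta(e)$: for every edge $e \in E(H)$ we subdivide $e$ and place the (bounded-size) vertices of $\tilde\eta(e)$ into newly created vertex atoms, which preserves the $w$-weight bound on atoms because potatoes have size at most $\Delta+1$. We then invoke Theorem~\ref{thm:extendH} inside the ambient graph $G - Y$, in which $Z$ is constricted by hypothesis, and obtain a semi-tame extended strip decomposition $(H',\eta')$ of $(M',Z)$ in which a subdivision of $H$ appears as a subgraph of $H'$.

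The crux is bounding the $w'$-weight of the atoms of $(H',\eta')$. The key identity comes from the definition of the central-bag weight:
\[
w(v) \;=\; w'(v) + \sum_{S \in \anchor_\S^{-1}(v)} (w')^{*}(A(S)) \qquad \text{for every } v \in \beta',
\]
so $w$ ``charges'' the $w'$-weight of each $A(S)$ against its anchor $\anchor_\S(S) \in C(S)$. Provided we ensure that, for every relevant $S$, the pocket $A(S) \cap M'$ lands in the same atom of $(H',\eta')$ as $\anchor_\S(S)$, the new atom $A'$ extending an old atom $A$ satisfies
\[
w'(A') \;\leq\; w'(A \cap \beta') + \!\!\sum_{S \,:\, \anchor_\S(S) \in A}\! w'(A(S) \setminus Y) \;\leq\; w(A) \;\leq\; 1-c,
\]
which is the required bound.

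The main obstacle is forcing the extension to respect the anchor structure, since the abstract statement of Theorem~\ref{thm:extendH} gives no explicit control over the placement of new vertices. We resolve this by applying Theorem~\ref{thm:extendH} (or rather its proof from~\cite{3-in-a-tree}) \emph{iteratively}, one separation at a time: at each step we enlarge the current decomposition by a single pocket $A(S) \cap M'$, and we exploit the fact that this pocket is connected to the rest of $G-Y$ only through $C(S)$ (so it must be attached at an atom whose boundary already meets $C(S)$) together with the fact that $\anchor_\S(S)$ is a distinguished vertex of $C(S)$ already lying in a unique atom, to place the pocket in the atom containing $\anchor_\S(S)$. Maintaining the invariant that every atom has $w'$-weight at most $1-c$ throughout the iteration, via the weight accounting above, yields the conclusion; the hardest part will be a careful case analysis of where $C(S)$ sits in $(H,\eta)$—vertex-, edge-, or triangle-atom, with or without the anchor on a potato—to verify that an atom-preserving attachment is always available.
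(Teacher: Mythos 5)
There is a genuine gap here. Your plan hinges on the ability to dictate \emph{where} each pocket $A(S) \cap M'$ lands inside the new decomposition — specifically, that it ends up in the same atom as $\anchor_\S(S)$, so that the weight identity $w(v) = w'(v) + \sum_{S \in \anchor_\S^{-1}(v)} (w')^*(A(S))$ can be used to ``charge'' each pocket to its anchor. You correctly flag this as ``the main obstacle,'' but the fix you offer — iterating Theorem~\ref{thm:extendH} one separation at a time and placing each pocket by hand — is not carried out and does not obviously go through: $C(S)$ has up to $(3t+1)\Delta$ vertices and diameter $2t+2$, so it can straddle several atoms, several potatoes, or sit partly on the boundary (potatoes are not part of any atom under the paper's definitions), in which case there is no single atom ``containing'' $\anchor_\S(S)$ for the accounting to charge. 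Moreover each application of Theorem~\ref{thm:extendH} can restructure $H$ (it only returns $H$ as a subgraph of a \emph{subdivision} of $H'$), so the invariant you would need to carry through the iteration is not stated, and a careful case analysis of $C(S)$'s position — which you defer — is exactly where the difficulty of the lemma lives. A smaller but real flaw: your preliminary ``surgery'' of subdividing each edge $e$ and parking $\tilde\eta(e)$ in a new vertex atom introduces degree-two vertices into $H$, which is explicitly forbidden in the semi-tame definition, so the result is not a valid input to Theorem~\ref{thm:extendH} (the paper instead simply deletes $\bigcup_e \tilde\eta(e)$ from $M$ and applies the theorem to the smaller graph). Finally, $\tilde\eta(e)$ need not be bounded in size, so the parenthetical justification ``because potatoes have size at most $\Delta+1$'' is a non sequitur, though the weight bound on the new vertex atoms would in fact hold since $\tilde\eta(e)$ sits inside the edge atom.

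The paper's proof sidesteps your placement problem entirely and is non-constructive in exactly the spot where you need control: it applies Theorem~\ref{thm:extendH} \emph{once} to get $(H',\eta')$ with no attempt to steer the result, then argues by contradiction that every atom of $(H',\eta')$ must already be light. The mechanism is the notion of \emph{heavy} vertices (those nearly complete to some branch clique of a chosen frame $F'$): boundaries of atoms in both $(H,\eta)$ and $(H',\eta')$ consist of heavy vertices, so connected pieces of $M \setminus \Heavy$ stay inside single atoms of $(H,\eta)$ and inherit the weight bound; serious separations (those whose $C(S)$ meets $\Heavy$) are absorbed into a set $R$ whose interaction with each new atom $Q$ is bounded by $(9/2 \cdot \Delta^2)^{\Delta(2t+2)}$ via the diameter bound on $C(S)$; and if some atom $Q$ of $(H',\eta')$ had $w'(Q) > 1-c$, the union of $N[Q] \cap (R \setminus Y')$, $Y$, and the marginal $C(S)$'s would be a $(w',c)$-balanced separator of size at most $d$, contradicting the hypothesis. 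You would be better served by adopting this contradiction argument rather than trying to force anchor-preserving placements.
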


\begin{proof}
Recall that every vertex $z \in Z$ belongs to $\eta(e,v)$ for some $e \in E(H)$ and $v \in e$,
and $\tilde{\eta}(e) \cap \eta(e,v) = \emptyset$.
Consequently, $Z \cap \bigcup_{e \in E(H)}\tilde{\eta}(e) = \emptyset$. Thus $(H,\eta)$ can be seen as a tame extended strip decomposition of $(M - \bigcup_{e \in E(H)}\tilde{\eta}(e), Z)$ (here we slightly abuse the definition -- formally, the function $\eta$ needs to modified by removing the vertices from $\bigcup_{e \in E(H)}\tilde{\eta}(e)$ from all sets).
Furthermore, $(M - \bigcup_{e \in E(H)}\tilde{\eta}(e))$ is an induced subgraph of $M'$.
Thus we can apply Theorem~\ref{thm:extendH} to obtain a semi-tame extended strip decomposition $(H',\eta')$ of $M'$, such that a subdivision of $H$ is a subgraph of $H'$.
In what follows, the sets $\pot( \cdot )$ and $\boundary( \cdot )$ are defined with respect to $(M',\eta')$.
For a vertex $x$ and a set $X$, whenever we write $N(x)$ or $N(X)$ without any subscript, we mean the neighborhood in $G$.

Let us fix a frame $F'$ for $(M',\eta')$, i.e., we choose one rung for each strip.
For each branch vertex $v$ of $H'$ there is a branch clique $K_{F'}(v)$ in $F'$.
In particular, each segment of $\pot(v)$ is represented by a single vertex of $K_{F'}(v)$.
Furthermore, as each branch vertex of $H$ is a branch vertex of $H'$, we observe that for each branch vertex $v$ of $H$ there is a corresponding branch clique in $F'$.

We say that a vertex $x$ of $M'$ is \emph{heavy} if there is a branch clique $K$ in $F'$, such that $x$ has at most one non-neighbor in $K$.
Let $\Heavy$ be the set of all heavy vertices in $M'$.

First, let us observe that $\Heavy$ contains all boundaries of atoms of $(H,\eta)$ and of $(H',\eta')$.

\begin{claim}\label{clm:heavyH}
For every $x \in V(M)$, if $x \in \bigcup_{uv \in E(H)} \eta(uv,v)$, then $x \in \Heavy$.
Consequently, every connected subset of $V(M) \setminus \Heavy$ is fully contained in a single atom of $(H,\eta)$.
\end{claim}
\begin{claimproof}
Let $uv \in E(H)$ be such that $x \in \eta(uv,v)$ and let $K$ be the branch clique of $F'$ that corresponds to $v$.
Thus $x$ belongs to a segment $\eta(uv,v)$ of $\bigcup_{u'v \in E(H)} \eta(u'v,v)$.
Recall that $K$ contains exactly one vertex from each segment of $\bigcup_{u'v \in E(H)} \eta(u'v,v)$.
Thus $x$ is adjacent to all vertices of $K$, possibly except for the one chosen for the segment $\eta(uv,v)$. 

The second statement follows from the fact that every path in $M$ with endpoints in distinct atoms of $(H,\eta)$ must cross their boundaries which, as already observed, are contained in $\Heavy$.
\end{claimproof}

\begin{claim}\label{clm:heavyHprime}
For every $x \in V(M')$, if $x \in \bigcup_{uv \in E(H')} \eta'(uv,v)$, then $x \in \Heavy$.
Consequently, every connected subset of $V(M') \setminus \Heavy$ is fully contained in a single atom of $(H',\eta')$.
\end{claim}
\begin{claimproof}
The proof is similar (but simpler) as the one of Claim~\ref{clm:heavyH}:
If $x \in \eta'(uv,v)$, then $x$ is adjacent to every vertex from $K_{F'}(v)$, possibly except for the vertex chosen for the segment $\eta'(uv,v)$.
\end{claimproof}

Let $\cD$ be the set of connected components of $M - \Heavy$.

\begin{claim}\label{clm:weightD}
For each $D \in \cD$ it holds that $w(D) \leq 1-c$.
\end{claim}
\begin{claimproof}
By Claim~\ref{clm:heavyH} each $D \in \cD$ is contained in some atom of $(H,\eta)$, and the weight of each atom is at most $1-c$.
\end{claimproof}

A separator $S \in \cS$ is \emph{marginal} if $C(S) \cap Y \neq \emptyset$.
By $\cS_{m}$ we denote the family or marginal separators from $\cS$.

\begin{claim}\label{clm:Yprimesmall}
$|\bigcup_{S \in \cS_m} C(S)| \leq (3t+1)\Delta 2^{\Delta}|Y|$. 
\end{claim}
\begin{claimproof}
Since for every $v \in Y$, we have $|\{S \in \S \mid v \in C(S)\}| \leq 2^{\Delta}$, we conclude that $|\cS_m| \leq 2^{\Delta}|Y|$.
The claim follows from the fact that for all $S \in \cS$ we have $|C(S)| \leq (3t+1)\Delta$.
\end{claimproof}

Let $Y'$ denote $\bigcup_{S \in \cS_m} (A(S) \cup C(S))$.
Define $\cS' := \cS \setminus \cS_m$.
For $S \in \cS'$, let $A'(S)$ denote $A(S) \cap M'$.
We say that a separation $S \in \cS'$ is \emph{serious} if $C(S)$ is not contained in a single connected component of $M - \Heavy$.

\begin{claim}\label{clm:seriousheavy}
For every serious separation $S$, it holds that $C(S)$ intersects $\Heavy$.
Furthermore, for any $x \in C(S) \cap \Heavy$ it holds that $C(S) \subseteq N^{2t+2}[x]$.
\end{claim}
\begin{claimproof}
The first statement follows immediately since $C(S)$ is connected.
The second statement follows by observing that the diameter of $C(S)$ is at most $2t+2$.
\end{claimproof}

Define $R := \Heavy \cup \{C(S)  ~|~ S  \in \cS' \text{ is serious} \}$.
Let us point out that the set $R$ might be arbitrarily large.
However, we will show that its interaction with the rest of the graph is simple.

\begin{claim} \label{clm:attachA}
For each $S \in \cS'$, there is at most one $D \in \cD$ such that $A'(S)- R$ attaches to $D - R$.
\end{claim}
\begin{claimproof}
For contradiction, suppose that for some $S \in \cS$, the set $A'(S) -R$ attaches to $D -R$ and $D' -R$ for distinct $D,D' \in \cD$.
Recall that $N(A'(S)) \subseteq C(S) \cup Y$, so $N_{G-Y}(A'(S)) \subseteq C(S) \setminus Y = C(S)$, as $S$ is not marginal.
Consequently, $C(S)$ intersects $D$ and $D'$, and thus $S$ is a serious separation.
In particular, we have $C(S) \subseteq R$.
Therefore there are no edges from $A'(S)-R$ to $D - R$ and to $D' - R$, a contradiction.
\end{claimproof}

We introduce a classification of components $D'$ of $M' - R - Y'$ with respect to their relation to the elements of $\cD$ and to $A'(S)$ for $S \in \cS'$:
\begin{enumerate}[{Type~}1:]
\item $D'$ is entirely contained in $A'(S)$ for some $S \in \cS'$.
\item $D'$ is entirely contained in some $D \in \cD$.
\item $D'$ is not of type~1 nor~2. By Claim~\ref{clm:attachA} then there is some $D \in \cD$
such that $D' \subseteq D \cup \bigcup_{S \in \cS' ~:~ C(S) \subseteq D} A'(S)$.
Note that in this case, if $D' \cap A'(S) \neq \emptyset$, then $S$ is not serious.
\end{enumerate}

\begin{claim}\label{clm:componentsaresmall}
For each component $D'$ of $M' - R - Y'$ it holds that $w'(D) \leq 1-c$.
\end{claim}
\begin{claimproof}
If $D'$ is of type~1, then $D' \subseteq A'(S) \subseteq A(S)$ for some $S \in \cS'$ and thus $w'(D) \leq w'(A(S)) \leq 1-c$ by the assumption on $\cS$.
If $D'$ is of type~2, then $D' \subseteq D$ for some $D \in \cD$.
As for every $v \in D$ it holds that $w'(v) \leq w(v)$, we obtain $w'(D') \leq w'(D) \leq w(D) \leq 1-c$ by Claim~\ref{clm:weightD}.
So assume that $D'$ is of type~3, i.e., $D' \subseteq D \cup \bigcup_{S \in \cS' ~:~ C(S) \subseteq D} A'(S)$ for some $D \in \cD$.
For every $S$ it holds that $w'(C(S) \cup A(S)) \leq w(C(S))$ and consequently 
\[
w'(D') \leq w' \left(D \cup \bigcup_{S \in \cS' ~:~ C(S) \subseteq D} A'(S) \right) \leq w' \left(D \cup \bigcup_{S \in \cS' ~:~ C(S) \subseteq D} A(S) \right) \leq w(D) \leq 1-c\]
by Claim~\ref{clm:weightD}.
\end{claimproof}

Now we aim to analyze how elements of $\cD$ interact with $R \setminus Y'$.

\begin{claim} \label{clm:attachDprime}
For each $D \in \cD$ it holds that $D \cap (R \setminus Y') \subseteq N^{2t+2}[N[D] \cap \Heavy]$.
\end{claim}
\begin{claimproof}
As $D \cap \Heavy$ is clearly contained in $N^{2t+2}[N[D] \cap \Heavy]$, let us consider $((R \setminus \Heavy) \setminus Y') \cap D$.
Let $\cS'_D \subseteq \cS'$ be the set of serious separations that intersect $D$.

Consider $S \in \cS'_D$. By Claim~\ref{clm:seriousheavy} we know that $C(S) \cap \Heavy \neq \emptyset$.
Furthermore, as $C(S)$ is connected and intersects $D$, we observe that $C(S) \cap \Heavy \cap N(D) \neq \emptyset$.
Thus, again using Claim~\ref{clm:seriousheavy} for any $x \in C(S) \cap \Heavy \cap N(D)$, we observe that 
$C(S) \subseteq N^{2t+2}[x] \subseteq  N^{2t+2}[N[D] \cap \Heavy]$.

In summary, we obtain
\[
(R \setminus \Heavy \setminus Y') \cap D \subseteq \bigcup_{S \in \cS'_D} C(S)  \subseteq N^{2t+2}[N[D] \cap \Heavy],
\]
which completes the proof of the claim.
\end{claimproof}

Now let us focus on atoms of $(H',\eta')$.

\begin{claim} \label{clm:attachQ}
For each atom $Q$ of $(H',\eta')$, it holds that $|N[Q] \cap (R \setminus Y')| \leq (9/2 \cdot \Delta^2)^{\Delta(2t+2)}$.
\end{claim}
\begin{claimproof}
For an atom $Q$, let $\cD_Q$ be the set of those $D \in \cD$ for which $Q \cap D \neq \emptyset$.
Note that by Claim~\ref{clm:heavyHprime} each $D \in \cD_Q$ is fully contained in $Q$.

First, we focus on $N[Q] \cap \Heavy$.
Recall that every heavy vertex from $N[Q]$ either belongs to the boundary of $Q$, or is adjacent to a vertex of the boundary of $Q$.
Thus $N[Q] \cap \Heavy \subseteq N[\boundary(Q)]$.
Since $|\boundary(Q)| \leq 9/2 \cdot \Delta$, we obtain that $|N[Q] \cap \Heavy| \leq 9/2 \cdot \Delta^2$.

As the boundary of $Q$ is contained in $\Heavy$ by Claim~\ref{clm:heavyHprime},
we have that $N[Q] \cap ((R \setminus \Heavy) \setminus Y') = Q \cap ((R \setminus \Heavy) \setminus Y')$.
We observe that $Q \cap ((R \setminus \Heavy) \setminus Y') \subseteq \bigcup_{D \in \cD_Q} (D \cap (R \setminus Y'))$.

Consider $D \in \cD_Q$. By Claim~\ref{clm:attachDprime} we observe that
\[
D \cap (R \setminus Y') \subseteq N^{2t+2}[N[D] \cap \Heavy]  \subseteq N^{2t+2}[N[Q] \cap \Heavy],
\]
and thus
$Q \cap ((R \setminus \Heavy) \setminus Y') \subseteq N^{2t+2}[N[Q] \cap \Heavy]$.
Consequently we obtain that $N[Q] \cap (R \setminus Y') \subseteq N^{2t+2}[N[Q] \cap \Heavy]$ and thus 
$|N[Q] \cap (R \setminus Y')| \leq (9/2 \cdot \Delta^2)^{\Delta(2t+2)}$.
\end{claimproof}

To reach the final contradiction suppose that there is an atom $Q$ of $(H',\eta')$ such that $w'(Q) > 1-c$.
Let $X = (N[Q] \cap (R \setminus Y')) \cup Y \cup \bigcup_{C \in \cS_m} C(S)$ and consider the components of $G - X$.
Note that each component $D$ of $G -X$ is of one of the following types:
\begin{enumerate}
\itemsep0em
\item is contained in a component of $G - Y$ other than $M'$; in this case we have $w'(D) < 1-c$ by the assumption of the lemma,
\item is contained in $A(S)$ for some $S \in \cS_m$; in this case we have $w'(D) < 1-c$ by the assumption on $\cS$,
\item is contained in $Q$; in this case we have $w'(D) < 1-c$ by Claim~\ref{clm:componentsaresmall},
\item is contained in $M' \setminus Q$; in this case we have $w'(D) < c$ as $w'(Q)  > 1-c$.
\end{enumerate}

As, by Claim~\ref{clm:Yprimesmall} and Claim~\ref{clm:attachQ}, we have
\begin{align*}
|X| \leq & \ |(N[Q] \cap (R \setminus Y'))| +  |Y| +  \left|\bigcup_{C \in \cS_m} C\right| \\
 \leq & \ (9/2 \cdot \Delta^2)^{\Delta(2t+2)} +  |Y| +  (3t+1)\Delta 2^{\Delta}|Y|\\
 = & \ (9/2 \cdot \Delta^2)^{\Delta(2t+2)} + (1 + (3t+1)\Delta 2^{\Delta}) |Y| \\
 = & \ d,
\end{align*}
we conclude that $G$ has a $(w', c)$-balanced separator $X$ of size at most $d$, which contradicts our assumption of the lemma.
This completes the proof.
\end{proof}

\section{Proofs of main theorems}

In this section, we first prove our main combinatorial result, Theorem~\ref{thm:combinatorial}, and then we use Theorem~\ref{thm:combinatorial} to prove our main algorithmic result, Theorem~\ref{thm:algorithm}.

\subsection{Combinatorial result} \label{sec:combinatorial}

We begin with a lemma about central bags. 
\begin{lemma}
\label{lemma:helper-combinatorial}
Let $c \in [\frac{1}{2}, 1)$ and let $d, \Delta, t$ be positive integers with $d \geq (1 + \Delta + \hdots + \Delta^t)^{3t\Delta^{7t+4}}$. Let $G$ be a graph with maximum degree $\Delta$ and no $(w, c)$-balanced separator of size at most $d$. Let $\S$ be an $A$-loosely laminar sequence of separations such that the diameter of $C(S)$ is at most $2(t+1)$ for every $S \in \S$, let $\beta$ be the central bag for $\S$, and let $w_\S$ be the weight function on $\beta$. Let $Q, Y \subseteq \beta$, and let $C$ be a connected component of $\beta \setminus Y$. Let $Q' = N^{2(t+1)\Delta}[Q] \setminus C$. Let $D$ be a component of $G \setminus Q'$ such that $C \cap D = \emptyset$. Then,
one of the following holds: 
\begin{enumerate}[(i)]
\itemsep0em
    \item There exists $S \in \S$ such that $D \subseteq A(S)$, and so $w(D) \leq 1-c$, or 
    
    \item $D$ meets a unique component $C'$ of $\beta \setminus Q$, and
$w(D) \leq w_\S(C')$.
\end{enumerate}
\end{lemma}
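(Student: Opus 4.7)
The plan is to dichotomise on whether $D \cap \beta$ is empty. If $D \cap \beta = \emptyset$, then because $D$ is a connected component of $G \setminus Q'$ contained in $V(G) \setminus \beta$, any path in $G \setminus Q'$ between two vertices of $D$ stays inside $D$ and hence inside $V(G) \setminus \beta$. So $D$ is connected in $G \setminus \beta$ and is therefore contained in a single component of $G \setminus \beta$, which by Lemma~\ref{lemma:central_bags}(ii) lies in $A(S)$ for some $S \in \S$. Since $C(S)$ is connected of diameter at most $2(t+1)$ and $G$ has maximum degree $\Delta$, we have $|C(S)| \leq 1 + \Delta + \hdots + \Delta^{2(t+1)} \leq d$, so the standing convention gives $w(A(S)) < 1-c$ and hence $w(D) \leq 1-c$, establishing (i).

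Now assume $D \cap \beta \neq \emptyset$ and aim for (ii). First observe $D \cap \beta \subseteq \beta \setminus Q$: since $Q \setminus C \subseteq Q'$ and $D$ is disjoint from both $Q'$ and $C$, $D$ is disjoint from $Q$. The heart of the proof is to show $D \cap \beta$ is contained in a single component $C'$ of $\beta \setminus Q$. Suppose for contradiction that $u_1, u_2 \in D \cap \beta$ lie in different components of $\beta \setminus Q$, take a $u_1$-$u_2$-path $P$ inside $D$, and decompose $P$ into maximal subpaths in $\beta$ (which already sit in $\beta \setminus Q$) and maximal subpaths $R$ with internal vertices in $V(G) \setminus \beta$. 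For each such $R$, Lemma~\ref{lemma:central_bags}(ii) puts its internal vertices in some $A(S)$, and its endpoints $e_1, e_2 \in \beta$, being adjacent to $A(S)$, lie in $C(S)$ (using $\beta \cap A(S) = \emptyset$). Since $C(S)$ is connected of diameter at most $2(t+1)$ in $G$, there is a $C(S)$-path $P_S$ from $e_1$ to $e_2$ of $G$-length at most $2(t+1)$. Replacing each $R$ by the corresponding $P_S$ then yields a $u_1$-$u_2$-walk in $\beta \setminus Q$, contradicting the choice of components, provided each $P_S$ avoids $Q$.

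Verifying $P_S \cap Q = \emptyset$ is the crux and is exactly where the inflation radius $2(t+1)\Delta$ in the definition of $Q'$ is used. Since $e_1 \in D$ we have $e_1 \notin Q' \cup C$, so $\dist_G(e_1, Q) > 2(t+1)\Delta$; every vertex of $C(S)$ lies within $G$-distance $2(t+1)$ of $e_1$, hence at $G$-distance strictly greater than $2(t+1)(\Delta - 1) \geq 0$ from $Q$, and therefore outside $Q$. Thus $C(S) \cap Q = \emptyset$ and $P_S \subseteq \beta \setminus Q$, completing the uniqueness argument and defining the unique component $C'$.

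It remains to bound $w(D) \leq w_\S(C')$. Split $w(D) = w(D \cap \beta) + w(D \setminus \beta)$, with $w(D \cap \beta) \leq w(C')$ since $D \cap \beta \subseteq C'$. Let $\S_D = \{S \in \S : D \cap A(S) \neq \emptyset\}$. For every $S \in \S_D$, any $D$-path from a vertex of $D \cap A(S)$ to a vertex of $D \cap \beta$ exits $A(S)$ through $C(S)$, so $C(S) \cap D \cap \beta \neq \emptyset$; applying the distance argument above at any such vertex gives $C(S) \cap Q = \emptyset$, and then connectivity of $C(S)$ together with $C(S) \cap C' \neq \emptyset$ forces $C(S) \subseteq C'$, so in particular $\anchor_\S(S) \in C'$. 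Partitioning $D \setminus \beta \subseteq \bigcup_i A(S_i)$ by first appearance in the sequence yields
\[
w(D \setminus \beta) \leq \sum_{S \in \S_D} w^*(A(S)) \leq \sum_{S : \anchor_\S(S) \in C'} w^*(A(S)),
\]
and adding $w(C')$ recovers $w_\S(C')$ by definition. The main obstacle is the uniqueness step: without the large inflation $2(t+1)\Delta$, the $C(S)$-bypasses used to reroute $P$ could hit $Q$ and the entire replacement argument would collapse.
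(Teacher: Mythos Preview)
Your proof is correct and follows essentially the same strategy as the paper: both hinge on the fact that each $C(S)$ has diameter at most $2(t+1)$ while $Q'$ swallows the much larger ball $N^{2(t+1)\Delta}[Q]$, so any $C(S)$ touching $D$ must miss $Q$ and hence sit inside a single component $C'$ of $\beta\setminus Q$. The paper splits cases on whether some relevant $C(S)$ meets $Q$ (deducing $C(S)\subseteq Q'\cup C$ and thus $D\subseteq A(S)$), whereas you split on whether $D\cap\beta=\emptyset$ and then argue the contrapositive via path-rerouting; your organisation is a bit more explicit about the uniqueness of $C'$ and the weight estimate (unfolding $w_\S$ directly rather than invoking Lemma~\ref{lemma:central_bags}(iv)), but the underlying argument is the same.
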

\begin{proof}
Let $S \in \S$ and suppose $D \cap A(S) \neq \emptyset$. If $C(S) \cap Q \neq \emptyset$, then $C(S) \subseteq Q' \cup C$ and $D \subseteq A(S)$, so (i) holds. Therefore, we may assume that $C(S) \cap Q = \emptyset$ for all $S \in \S$ such that $D \cap A(S) \neq \emptyset$. Then, there exists a component $C'$ of $\beta \setminus Q$ such that $C(S) \subseteq C'$ for all $S \in \S$ such that $D \cap A(S) \neq \emptyset$. Now, it follows that $C'$ is the unique connected component of $\beta \setminus Q$ such that $D \cap C' \neq \emptyset$. By (iv) of Lemma~\ref{lemma:central_bags},  $w_\S(C') \geq w(C') + \sum_{S \in \S, C(S) \subseteq C'} w(A(S))$, so (ii) holds. 
\end{proof}

Now, we are ready to prove the main combinatorial result of the paper. 

\thmcomb*
\begin{proof}
Let $c = 1 - \frac{1}{10\Delta}$ and let $d = (1 + \Delta + \hdots + \Delta^t)^{3t\Delta^{7t+4}}$. Let $G$ be an $n$-vertex $S_{t,t,t}$-free graph with maximum degree $\Delta$, and assume $G$ has no $c$-balanced separator of size at most $d$. Let $w: V(G) \to [0, 1]$ be the weight function on $V(G)$ such that $w(v) = \frac{1}{n}$ for every $v \in V(G)$. Let $k = 3t\Delta^{7t+4}$, and let $\beta_0, \beta_1, \hdots, \beta_k, \beta_{k+1}$ be the central bag decomposition of $G$ and let $w_i$ be the weight function on $\beta_i$ for $1 \leq i \leq k+1$. Let $\{v_1, v_2, v_3\}$, $G'$, $\beta_{k+1}''$, $Y$, and $G''$ be as in Theorem~\ref{thm:esd-summary}. By Theorem~\ref{thm:esd-summary}, we can find in polynomial time a semi-tame extended strip decomposition $(H_{k+1}, \eta_{k+1})$ of $(\beta_{k+1}'', \{v_1, v_2, v_3\})$ where every atom has weight at most $1-c$ (under $w_{k+1}$). 

Let $\beta_i' = (\beta_i \cup \{v_1, v_2, v_3\})$ for $1 \leq i \leq k$, let $\beta_{k+1}' = \beta_{k+1}''$, and let $Y_i = Y \cap \beta_i$. Let $C_i$ be the component of $\beta_i' \setminus Y_i$ that contains $\beta_{k+1}''$. Let $Q_{k+1} = Q_{k+1}' = \emptyset$. 
Let $Q_{i} = Y_{i} \cup Q_{i+1}'$ and let $Q_{i}' = N^{2(t+1)\Delta}[Q_{i}] \setminus C_{i}$. 
We now prove the following three statements inductively:
\begin{enumerate}[(i)] 

\item  $w_i(C) < 1-c$ for every component $C \neq C_i$ of $\beta_i' \setminus Q_i'$; 
\item $|Q_i'| \leq 3(1 + \Delta + \hdots + \Delta^t) \sum_{j = 1}^{k-i} (1 + \Delta + \hdots + \Delta^{2(t+1)\Delta})^j$ for all $k \leq i \leq 0$; and
\item we can obtain in polynomial time a semi-tame extended strip decomposition $(H_i, \eta_i)$ of $(C_i, \{v_1, v_2, v_3\})$ where every atom has weight at most $1-c$ (under $w_i$).
\end{enumerate} 
Since $Q_{k+1}' = \emptyset$, it follows that $C_{k+1} = \beta_{k+1}'$, so the base case is true. 

Assume the statement holds for $i+1$. Note that $Q_{i+1} \subseteq \beta_{i+1}'$. By Lemma~\ref{lemma:helper-combinatorial} (plugging in $Q = (Y_i \cup Q_{i+1}')\cap \beta_{i+1}'$), it follows that for every component $C \neq C_i$ of $\beta_i' \setminus Q_i'$, either $w_i(C) < 1-c$ or $w_i(C) \leq w_{i+1}(C')$ for some component $C'$ of $\beta_{i+1}' \setminus Q_{i+1}'$. By the inductive hypothesis, $w_{i+1}(C') < 1-c$ for every component $C'$ of $\beta_{i+1}' \setminus Q_{i+1}'$. It follows that $w_i(C) < 1-c$ for every component $C \neq C_i$ of $\beta_i' \setminus Q_i'$. This proves (i). 
Next, note that $|Q_i'| \leq |N^{2(t+1)\Delta}[Q_{i-1}']| + |N^{2(t+1)\Delta}[Y]|$. Since $|Y| \leq 3(1 + \Delta + \hdots + \Delta^t)$, this proves (ii). 
Finally, since $C_{i+1}$ is a component of $\beta_{i+1}' \setminus Q_{i+1}'$, $Q_{i+1}' \subseteq Q_i'$, and $C_{i+1} \cap Q_i' = \emptyset$, it follows that $C_{i+1} \subseteq C_i$. By the inductive hypothesis, we have a semi-tame extended strip decomposition $(H_{i+1}, \eta_{i+1})$ of $(C_{i+1}, \{v_1, v_2, v_3\})$. Therefore, by Lemma~\ref{lemma:esd-induction}, we can compute in polynomial time a semi-tame extended strip decomposition $(H_i, \eta_i)$ of $(C_i, \{v_1, v_2, v_3\})$ such that every atom has weight at most $1-c$ (under $w_i$). This proves (iii) and completes the induction.

Now, we have a set $Q_0'$ of size at most $z:=3(1 + \Delta + \hdots + \Delta^t) \sum_{j = 1}^{k} (1 + \Delta + \hdots + \Delta^{2(t+1)\Delta})^j$ with $Y \subseteq Q_0'$ and components $C_1, \hdots, C_m$ of $G' \setminus Q_0'$, such that $\beta_{k+1}'' \subseteq C_1$ and $w(C_i) < 1-c$ for all $2 \leq i \leq m$. We also obtain a semi-tame extended strip decomposition $(H_0, \eta_0)$ of $(C_1, \{v_1, v_2, v_3\})$ such that every atom has weight at most $1-c$ (under $w$). Let $(H', \eta')$ be the extended strip decomposition of $C_1 \setminus \{v_1, v_2, v_3\}$ given by deleting the vertices $v_1, v_2, v_3$ where they appear in $(H_0, \eta_0)$. Let $X = Q_0'$ and let $(H, \eta)$ be the extended strip decomposition of $G - X$ given by adding $m-1$ isolated vertices $c_2, \hdots, c_m$ to $H'$ and extending $\eta'$ to $\eta$ by setting $\eta(x) = \eta'(x)$ for all $x$ in the domain of $\eta'$, and $\eta(c_i) = V(C_i)$ for $2 \leq i \leq m$. Since every atom of $(H_0, \eta_0)$ has weight at most $1-c$ (under $w$) and $w(C_i) < 1-c$ for $2 \leq i \leq m$, it follows that every atom of $(H, \eta)$ has weight at most $1-c$ (under $w$). Since $c = 1 - \frac{1}{10\Delta}$ and $w(v) = \frac{1}{n}$ for all $v \in V(G)$, we conclude that every atom of $(H, \eta)$ has at most $\frac{1}{10\Delta} \cdot n$ vertices. Finally, since $(H_0, \eta_0)$ is semi-tame, it holds that every $\eta'(e)$ is nonempty, and so $|V(H')| \leq |C_1 \setminus \{v_1, v_2, v_3\}|$. Therefore, $|V(H)| \leq n$. This completes the proof. 
\end{proof}

\subsection{Solving Maximum Weight Independent Set} \label{sec:algo}
In this section we use Theorem~\ref{thm:combinatorial} to prove Theorem~\ref{thm:algorithm}.

\thmalgo*

Our proof follows the idea of Chudnovsky et al.~\cite{DBLP:conf/soda/ChudnovskyPPT20,DBLP:journals/corr/abs-1907-04585}, but our starting point is Theorem~\ref{thm:combinatorial}, instead of a similar statement obtained by Chudnovsky et al.
We aim to reduce the problem of finding a maximum-weight independent set in a graph given with an extended strip decomposition to the problem of finding a maximum-weight matching in an auxiliary graph.
The algorithm itself is exactly the same as the one of Chudnovsky et al., but we present it for the sake of completeness.

Let us start with introducing some more terminology. We point out that some notions are defined in a slightly different way than in~\cite{DBLP:conf/soda/ChudnovskyPPT20,DBLP:journals/corr/abs-1907-04585}. This is to make them consistent with the rest of our paper.

Let $G$ be a graph and let $(H,\eta)$ be an extended strip decomposition of $G$.
Recall that we have distinguished some sets called atoms (edge atoms, vertex atoms, and triangle atoms).

Let us now define some larger sets, called \emph{particles}.
First, for each $v \in V(H)$, each edge $uv \in E(H)$, and each triangle $uvw \in T(H)$ we have particles that are equal to the corresponding atoms
\begin{align*}
A_{v} = &\eta(v), \\
A_{uv}^{\perp} = & \eta(uv) \setminus \left(\eta(uv,u) \cup \eta(uv,v) \right),\\
A_{uvw} = &\eta(uvw).
\end{align*}
Additionally, for each $uv \in E(H)$ we distinguish three more particles related to $uv$.
\begin{align*}
A_{uv}^{u} = & \eta(u) \cup \eta(uv) \setminus \eta(uv,v), \\
A_{uv}^{v} = & \eta(v) \cup \eta(uv) \setminus \eta(uv,u), \\
A_{uv}^{uv} = & \eta(u) \cup \eta(v) \cup \eta(uv) \cup \bigcup_{w ~:~ uvw \in T(H)} \eta(uvw).
\end{align*}

The following lemma encapsulates the reduction by Chudnovsky et al.~\cite{DBLP:conf/soda/ChudnovskyPPT20,DBLP:journals/corr/abs-1907-04585}.
\begin{lemma}[\cite{DBLP:conf/soda/ChudnovskyPPT20,DBLP:journals/corr/abs-1907-04585}]\label{lem:matching}
Let $\varsigma \in [0,1]$ be a a real.
Let $G$ be an $n$-vertex graph equipped with a weight function $\wei: V(G) \to \N$.
Suppose that $G$ is given along with an extended strip decomposition $(H,\eta)$, where $H$ has $N$ vertices.

\noindent Let $I_0 \subseteq V(G)$ be a fixed independent set in $G$.
Furthermore, assume that for each particle $A$ of $(H,\eta)$ we are given an independent set $I(A)$ in $G[A]$, such that $\wei(I(A)) \geq \varsigma \cdot \wei(I_0 \cap A)$.
Then in time polynomial in $n+N$ we can compute an independent set $I$ in $G$, such that $\wei(I) \geq \varsigma \cdot \wei(I_0)$.
\end{lemma}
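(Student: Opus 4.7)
The plan is to reduce the problem to \textsc{Maximum Weight Matching} on an auxiliary edge-weighted multigraph $H^\star$ built from $H$, and then invoke any standard polynomial-time weighted matching algorithm (e.g., Edmonds'~\cite{edmonds_1965}). The structural key is that, because $\pot(v) = \bigcup_{uv \in E(H)} \eta(uv,v)$ is a clique of $G$ for every $v \in V(H)$, every independent set in $G$ meets $\pot(v)$ in at most one vertex, and therefore at most one edge $e$ incident to $v$ in $H$ can satisfy $I \cap \eta(e,v) \neq \emptyset$. Each $v \in V(H)$ is thus ``claimed'' by at most one incident $H$-edge, which is exactly a matching-type constraint.

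To build $H^\star$, I would, for each edge $e=uv \in E(H)$, create four parallel edges (wired through fresh dummy vertices where necessary) corresponding to the four ways a solution can interact with $\eta(e)$ and its attached atoms: the interior $A_e^\perp$ (claiming neither endpoint, weight $\wei(I(A_e^\perp))$), $A_e^u$ (claiming $u$, weight $\wei(I(A_e^u))$), $A_e^v$ (claiming $v$, weight $\wei(I(A_e^v))$), and $A_e^{uv}$ (claiming both $u$ and $v$, and absorbing the incident vertex atoms $A_u,A_v$ and all triangle atoms $A_{uvw}$ for triangles containing $e$, with weight $\wei(I(A_e^{uv}))$). For each $v \in V(H)$ I would attach a pendant edge of weight $\wei(I(A_v))$ encoding the choice of the vertex atom; triangle atoms that end up unabsorbed by any $A_e^{uv}$ are handled by further dummy pendants wired so that, in any matching, each $\eta(uvw)$ can be claimed at most once. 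The standard matching constraint at vertices of $V(H)$ then faithfully mirrors the ``at most one claim per $v$'' structural condition.

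Correctness has two directions. \emph{Lower bound.} The reference set $I_0$ induces a canonical selection $\mathcal{A}$ of particles: at each $v \in V(H)$ the clique $\pot(v)$ permits at most one active incident edge, which determines for each edge $e=uv$ whether to pick $A_e^{\perp}$, $A_e^u$, $A_e^v$, or $A_e^{uv}$, and analogously whether to pick $A_v$ or $A_{uvw}$; this choice partitions the vertex set carrying $I_0$, so $\sum_{A \in \mathcal{A}} \wei(I_0 \cap A) = \wei(I_0)$, and $\mathcal{A}$ decodes to a valid matching in $H^\star$ of weight $\sum_{A \in \mathcal{A}} \wei(I(A)) \geq \varsigma\wei(I_0)$. \emph{Decoding.} Given any matching $M^\star$ in $H^\star$, we read off a particle collection $\mathcal{A}$ and set $I := \bigcup_{A \in \mathcal{A}} I(A)$. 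By the adjacency rules of an extended strip decomposition, edges of $G$ between distinct atoms can only occur across segment boundaries or triangle boundaries, and the matching constraint enforces that each such boundary is claimed by at most one particle in $\mathcal{A}$; hence the sets $I(A)$ are pairwise anticomplete, $I$ is independent in $G$, and $\wei(I) = \wei(M^\star) \geq \varsigma\wei(I_0)$.

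The main obstacle is the bookkeeping for vertex and triangle atoms, since $\eta(v)$ lies inside $A_v$ as well as in every $A_e^v$ and $A_e^{uv}$ with $e$ incident to $v$, and $\eta(uvw)$ lies inside $A_{uvw}$ and inside all three particles $A_e^{uv}$ with $e$ an edge of the triangle $uvw$. The $H^\star$ gadget must therefore be designed so that each such region is claimed by exactly one selected particle in any matching, while still permitting the $I_0$-induced configuration to achieve the equality $\wei(I_0) = \sum_{A\in\mathcal{A}} \wei(I_0 \cap A)$; this is the only place where the parallel-edge/dummy-vertex/pendant construction requires careful design, but it can be set up locally edge-by-edge and triangle-by-triangle in time polynomial in $n+N$.
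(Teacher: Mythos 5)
Your high-level strategy is the right one and matches the paper's: exploit the fact that each $\pot(v)$ is a clique of $G$, so any independent set claims at most one segment at $v$, which is a matching-type constraint on $V(H)$, and then reduce to \textsc{Maximum Weight Matching}. However, your gadget construction has a genuine gap precisely at the spot you flag yourself, and the paper's construction is designed to avoid it.

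The problem is with the ``default'' particles $A_{uv}^{\perp}$, $A_v$, and $A_{uvw}$. In your sketch, $A_{uv}^{\perp}$ is ``wired through fresh dummy vertices,'' so the edge encoding it does not touch $u$ or $v$. But then a matching can simultaneously select that dummy edge (claiming $A_{uv}^{\perp}$) and an edge incident to $u$ encoding $A_{uv}^{u}$, even though $A_{uv}^{u} \supseteq A_{uv}^{\perp}$. The same issue recurs with $A_{uvw}$ versus $A_{uv}^{uv}$. Since you use absolute weights $\wei(I(A))$ on the gadget edges, such overlaps break the equality $\wei(I) = \wei'(M^\star)$: the decoded set $I = \bigcup_{A \in \mathcal{A}} I(A)$ could have weight strictly less than the matching weight (or even fail to be independent) because the $I(A)$ are no longer pairwise disjoint. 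You acknowledge that the gadget ``requires careful design,'' but this is not a routine detail --- it is exactly where the argument lives.

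The paper avoids this entirely by a different gadget. It adds only a single auxiliary vertex $x_{uv}$ per edge $uv$ (giving three gadget edges $uv$, $x_{uv}u$, $x_{uv}v$, not four parallel edges plus pendants), and --- crucially --- puts \emph{difference} weights on these edges rather than absolute ones; e.g.\ $\wei'(x_{uv}u) = \wei(I(A_{uv}^{u})) - \wei(I(A_u)) - \wei(I(A_{uv}^{\perp}))$. The default particles $A_v$, $A_{uv}^{\perp}$, $A_{uvw}$ are never encoded as gadget edges at all; they are inserted into $\cA(M)$ automatically whenever the corresponding vertex/edge/triangle of $H$ is unmatched. This makes $\cA(M)$ always a genuine partition of $V(G)$, and the total weight telescopes so that $\wei'(M)$ plus a fixed baseline equals $\sum_{A \in \cA(M)} \wei(I(A))$. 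There is simply no way for the matching to ``forget'' a default or ``double-claim'' an overlap. If you want to keep your absolute-weight formulation, you would need an auxiliary gadget at each edge that enforces that \emph{exactly one} of $A_{uv}^{\perp}, A_{uv}^{u}, A_{uv}^{v}, A_{uv}^{uv}$ is chosen and that the choice is coherent with the pendant at $u$, the pendant at $v$, and the triangle dummies; that is doable but considerably more delicate than you suggest, and you do not supply it. I recommend switching to the difference-weight construction.

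One point in your favor: you actually sketch a correctness argument for the reduction (the canonical selection induced by $I_0$, and that selected $I(A)$ are pairwise anticomplete via the segment-boundary structure of an extended strip decomposition), whereas the paper defers the correctness claim entirely to the cited references. That part of your reasoning is sound in outline; the issue is the gadget that is supposed to make ``each region is claimed by exactly one selected particle'' actually hold.
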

\begin{proof}
Let $H'$ be the graph obtained from $H$ as follows:
For each edge $uv$ of $H$ we add an additional vertex $x_{uv}$, adjacent to $u$ and $v$.
We also define a weight function $\wei'$ on \emph{edges} of $H'$. For each $uv \in E(H)$, we define:
\begin{align*}
\wei'(x_{uv}u) 	= & \wei(I(A_{uv}^u)) - \wei(I(A_u)) - \wei(I(A_{uv}^{\perp})) \\
\wei'(x_{uv}v) 	= & \wei(I(A_{uv}^v)) - \wei(I(A_v)) - \wei(I(A_{uv}^{\perp})) \\
\wei'(uv) 		= & \wei(I(A_{uv}^{uv})) - \wei(I(A_u)) - \wei(I(A_v)) - \wei(I(A_{uv}^{\perp})) - \sum_{w ~:~uvw \in T(H)} \wei(I(A_{uvw})).
\end{align*}
Let $M$ be a matching in $H'$.
Let $\cA(M)$ be the family of particles of $(H,\eta)$ constructed as follows:
\begin{itemize}
\itemsep0em
\item for every $uv \in E(H) \cap M$, insert $A_{uv}^{uv}$ into $\cA(M)$,
\item for every $x_{uv}u \in M \setminus E(H)$, insert $A_{uv}^{u}$ into $\cA(M)$,
\item for every $uv \in E(H)$ such that $uv, x_{uv}u, x_{uv}v \notin M$, insert $A_{uv}^{\perp}$ into $\cA(M)$,
\item for every $v \in V(H)$ such that no edge containing $v$ is in $M$, insert $A_{v}$ into $\cA(M)$,
\item for every $uvw \in T(H)$ such that $uv,uw,vw \notin M$, insert $A_{uvw}$ into $\cA(M)$.
\end{itemize}

The following claim binds independent sets in $G$ with matchings in $H'$.
For its proof we refer the reader to~\cite[Sections 3.3 and 3.4]{DBLP:journals/corr/abs-1907-04585}.

\begin{claim}
If $M$ is a maximum-weight matching in $H'$ (with respect to $\wei'$),
then $\bigcup_{A \in \cA(M)} I(A)$ is an independent set in $G$,
such that $\wei(I) \geq \varsigma \cdot \wei(I_0 )$.
\end{claim}

Now let us argue about the running time. 
Constructing $H'$ and computing $\wei'$ can clearly be performed in time polynomial in $n+N$.
The number of vertices of $H'$ is polynomial in $N$, so finding a maximum weight matching $M$ in $H'$ takes time polynomial in $N$. Finally, the solution $I$ is also computed in polynomial time. This completes the proof.
\end{proof}

Now we are ready to prove Theorem~\ref{thm:algorithm}.
\begin{proof}[Proof of Theorem~\ref{thm:algorithm}]
First, we can assume that $G$ is connected, as otherwise we can solve the problem component by component.
Next, note that we can safely assume that $\Delta \geq 3$, as otherwise $G$ is a path or a cycle and thus the problem is polynomial-time solvable.

Let $c$, $d$, $z$, and $p$ be given by Theorem~\ref{thm:combinatorial} for $t$ and $\Delta$.
Note that we can safely assume that $z \geq \log_2 \Delta$ and recall that we treat $c,d,z,p$ as constants.
Let $\delta = \zeta \cdot \max \{p, z, \frac{d}{\log_2(c+1)} \}$, where $\zeta$ is an absolute constant (independent on $t$ and $\Delta$), whose value follows from the reasoning below.

We will show by induction on $n$ that a maximum-weight independent set in $n$-vertex $S_{t,t,t}$-free graph with maximum degree at most $\Delta$ can be found in time $\Oh(n^\delta)$.
If $n$ is bounded by a constant, then the problem can be solved by brute-force and thus the claim holds.
So from now on assume that $n$ is sufficiently large and the claim holds for all graphs with fewer than $n$ vertices.
Let $F(n')$ be the running time of our algorithm on instances of size $n'$.

First, we check if $G$ has a $c$-balanced separator of size at most $d$. Clearly this can be done in time $\Oh(n^{d+2})$ by exhaustive enumeration of all vertex subsets $S$ of size at most $d$ and, for each of them, computing the connected components of $G-S$.

\medskip
\noindent\textbf{Case 1. $G$ has a $c$-balanced separator $S$ of size at most $d$.} The proof in this case is a standard divide and conquer procedure, we describe it for the sake of completeness.

\begin{claim}
The vertices of $G-S$ can be partitioned into sets $L$, $R$, each of size at most $\frac{c+1}{2}\; n$.
\end{claim}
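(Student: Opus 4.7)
The plan is to use a straightforward greedy assignment of the connected components of $G - S$ into the two bags $L$ and $R$, with a short averaging argument to bound the excess.

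First I would recall the two facts that drive the argument: $|V(G-S)| = n - |S| \leq n$, and since $S$ is a $c$-balanced separator, every connected component $C$ of $G - S$ satisfies $|C| \leq c \cdot n$. Any partition respecting component boundaries will induce a valid partition of $V(G) \setminus S$; the only task is to control the sizes.

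Next, I would enumerate the components $C_1, C_2, \ldots, C_k$ of $G - S$ in an arbitrary order and process them one by one, maintaining two growing sets $L$ and $R$, initially empty. At step $i$, I add $C_i$ to whichever of $L, R$ currently has fewer vertices (breaking ties arbitrarily). After all components are processed, $\{L, R\}$ is a partition of $V(G) \setminus S$. To bound the sizes, assume without loss of generality that at the end $|L| \geq |R|$, and let $C$ be the \emph{last} component placed into $L$. By the choice rule, just before $C$ was added, the $L$-side had at most the size of the $R$-side at that moment, which is at most the final value $|R|$. Hence
\[
|L| \leq |R| + |C| \leq |R| + c \cdot n.
\]
Combined with $|L| + |R| \leq n$, this yields $2|L| \leq n + c \cdot n$, so $|L| \leq \tfrac{c+1}{2} n$, and $|R| \leq |L|$ gives the same bound on $R$.

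There is no real obstacle here; the claim is a routine balancing step, included just to set up the divide-and-conquer recursion. The only mild subtlety is remembering to bound $|L|$ via the \emph{last} component added to $L$ (rather than, say, arguing about an arbitrary midpoint), which is what allows the bound $(c+1)n/2$ instead of the weaker $(1+2c)n/2$ one would get from naively filling one side first.
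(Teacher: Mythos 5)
Your proof is correct, and it takes a genuinely different (and arguably cleaner) route than the paper's. The paper sorts the components of $G-S$ by decreasing size, first handles the case where the largest component already has size at least $\frac{1-c}{2}n$, and otherwise greedily accumulates components into $L$ until the running total first reaches $\frac{1-c}{2}n$, then bounds $|L|$ by $(1-c)n$ and $|R|$ by $\frac{c+1}{2}n$ using the case distinction and $c\geq \tfrac12$. Your version is the classic ``least-loaded bin'' greedy from makespan scheduling: process the components in arbitrary order, always placing the next one on the currently lighter side, and then bound the heavier side by looking at the last component added to it. This avoids both the sorting and the case split, and the inequality $|L|\leq |R|+c n$ together with $|L|+|R|\leq n$ immediately gives the same bound $\tfrac{c+1}{2}n$. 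Both arguments are routine; yours is a bit more streamlined and doesn't use the hypothesis $c\geq \tfrac12$ (which the paper's Case 2 does), though that hypothesis is available anyway. One tiny point worth keeping in mind when writing this up: if the heavier side at the end never received a component (i.e.\ is empty), there is no ``last component added to $L$,'' but then the claim is trivial, so this degenerate case causes no harm.
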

\begin{claimproof}
Let $D_1,D_2,\ldots,D_r$ be the connected components of $G-S$, ordered in way that $|D_1| \geq |D_2| \geq \ldots \geq |D_r|$.
Recall that for each $D_i$ we have $|D_i|\leq cn$.

First, consider the case that $|D_1| \geq \frac{1-c}{2}\; n$. Note that then $|\bigcup_{i=2}^r D_i| \leq n-|D_1| \leq \frac{c+1}{2}\; n$.
Thus we can set $L=D_1$ and $R = \bigcup_{i=2}^r D_i$.

So let us assume that $|D_1| <\frac{1-c}{2}\; n$. Let $q$ be the maximum integer such that $|\bigcup_{i=1}^q D_i| < \frac{1-c}{2}\; n$.
Note that $q < r$ (here we use the fact that $n$ is large compared to $d$).
Furthermore, $|D_{q+1}| \leq |D_q| | < |\bigcup_{i=1}^q D_i| < \frac{1-c}{2}\; n$ and, by the maximality of $q$, we have $|\bigcup_{i=1}^{q+1} D_i| \geq \frac{1-c}{2}\; n$.
Setting $L = \bigcup_{i=1}^{q+1} D_i$ and $R = V(G)-S-L$,
we obtain that $\frac{1-c}{2}\; n < |L| \leq (1-c)n$ and thus $|R| \leq \frac{c+1}{2}\; n$.
\end{claimproof}
\medskip

Let $L$ and $R$ be as in the claim above.
We exhaustively guess the intersection of a fixed optimum solution with $S$. To this end, we enumerate all independent sets in $G[S]$, and for each such set $I_S$ we look for a maximum-weight independent set in $G$ whose intersection with $S$ is precisely $I_S$. This results in at most $2^{|S|} \leq 2^d$ branches.
In each branch we call the algorithm recursively for the graphs $G[L-N(I_S)]$ and $G[R-N(I_S)]$.
The algorithm returns the set with the largest weight among the solutions obtained in all branches.

The running time is bounded by the following recursive inequality:
\[
F(n) \leq \Oh(n^{d+2}) + 2^{d} \cdot 2 \cdot F \left( \frac{c+1}{2}\; n \right).
\]
A standard calculation shows that this can be upper-bounded by $\Oh(n^\delta)$, using the fact that $\delta \geq \zeta \; \frac{d}{\log_2(c+1)}$ for a large constant $\zeta$.

\medskip

\noindent\textbf{Case 2. $G$ has no $c$-balanced separator of size at most $d$.}
We apply Theorem~\ref{thm:combinatorial} to $G$ to obtain in time $\Oh(n^p)$ a set $X \subseteq V(G)$ of size at most $z$ and an extended strip decomposition $(H,\eta)$ of $G-X$, where the size of each atom is at most $\frac{1}{10\Delta}n$.  Additionally, the maximum degree of $H$ is at most $\Delta$ and the number of vertices of $H$ is at most $n$.

We exhaustively guess the intersection $I_X$ of a fixed optimum solution with the set $X$.
This results in at most $2^{z}$ branches. 

Suppose we consider one such branch for an independent set $I_X \subseteq X$.
Define $G' = G - X - N(I_X)$. Observe that now we need to find a maximum-weight independent set in $G'$.
We modify $(H,\eta)$ by removing from the sets $\eta$ all vertices from $N(I_X)$.
Note that this way we obtain an extended strip decomposition of $G'$; for simplicity we will keep calling it $(H, \eta)$.

Observe that each particle of $(H,\eta)$ consists of at most $2 + \Delta$ atoms (note that each edge of $H$ is in at most $\Delta-1$ triangles) and some subset of $\pot(v)$ for some $v \in V(H)$. Recall that $|\pot(v)| \leq 3\Delta/2$. Thus,  using the fact that $n$ is sufficiently large, we conclude that the size of each particle is at most
\[
\frac{2 + \Delta}{10\Delta} n + 3\Delta \leq 0.5n.
\]
Now note that each vertex is in at most $3\Delta$ particles, so the total size of all particles is at most $3\Delta \cdot n$.
Finally, for each particle $A$, the graph $G'[A]$ is $S_{t,t,t}$-free and has maximum degree at most $\Delta$.
Thus, by the inductive assumption, for each particle $A$ we can compute a maximum-weight independent set $I(A)$ in $G'[A]$ in time $F(|A|)=\Oh(|A|^\delta)$. This takes total time
\[
\Oh\left (\sum_{\text{particle } A} F(|A|) \right) = \Oh\left( \max_{\substack{x_1,x_2,\ldots,x_p \\ x_1+\ldots +x_p \leq 3\Delta \cdot n \\ x_i \leq 0.5n}  } \; \sum_{i=1}^p F(x_i) \right).
\]
As $F(x_i) = \Oh(x_i^\delta)$ for $\delta > 1$, a standard calculation shows that the above sum is maximized if the values of $x_i$ are as large as possible.
Thus the complexity of this step can be upper bounded by $6\Delta \cdot F(0.5 n)$. Now we can apply Lemma~\ref{lem:matching} for $\varsigma=1$ and $I_0$ being a maximum-weight independent set in $G$, in order to find a maximum-weight independent set in $G'$ in time $\Oh(n^{\gamma})$ for a constant $\gamma$ (for example we can take $\gamma = 4$ by using the classic algorithm of Edmonds~\cite{edmonds_1965}).

The total computation time is given by the recursive inequality:
\[
F(n) \leq \Oh(n^{p}) + 2^z \cdot 6\Delta \cdot F(0.5n) + \Oh(n^{\gamma}).
\]
Again, using standard calculation this can be upper-bounded by $\Oh(n^\delta)$, since $z \geq \log_2 \Delta$ and $\delta \geq \zeta \; \max\{ p, z\}$ for a large constant $\zeta$. This completes the proof.
\end{proof}

\section{Conclusion and open problems} 
We have shown that the MWIS problem is polynomial-time solvable in $S_{t,t,t}$-free graphs with bounded maximum degree.
An obvious question is whether one can remove the assumption that the maximum degree is bounded and prove that MWIS is polynomial-time solvable in $S_{t,t,t}$-free graphs. A natural first step would be to try to find a quasipolynomial-time algorithm, as it was the case for $P_t$-free graphs~\cite{gartlandpK}. However, these problems seem really challenging.

Our result suggests some more modest (and probably easier) research directions.
First, instead of removing the assumption that the degree is bounded, we could relax it and assume that our graph has bounded degeneracy.
Let us point out that a somewhat related result was obtained for $C_{> t}$-free graphs, i.e., graphs that do not contain any induced cycle with more than $t$ vertices. First, Chudnovsky et al.~\cite{DBLP:conf/soda/ChudnovskyPPT20,DBLP:journals/corr/abs-1907-04585} proved that such graphs have treewidth $\Oh(t \cdot \Delta)$, which, in particular, implies that if $\Delta$ is bounded by a constant, then MWIS is polynomial-time solvable. This was later improved by Gartland et al.~\cite{gartlandCk}, who proved that $C_{> t}$-free graphs of degeneracy at most $d$ have treewidth $(dt)^{\Oh(t)}$.

Next, recall that the running time of our algorithm on $n$-vertex $S_{t,t,t}$-free graphs with maximum degree $\Delta$ is $n^{f(t,\Delta)}$ for some function $f$. Thus we have shown that MWIS is in \textsf{XP}, when parameterized by the maximum degree $\Delta$ (we refer the reader to the texbook by Cygan et al.~\cite{Cyganetal} for introduction to parameterized complexity classes).
We think it is interesting to study whether the problem is \emph{fixed-parameter tractable} (\textsf{FPT}) with respect to $\Delta$, i.e., if there exists an algorithm with running time $f(\Delta,t) \cdot n^{g(t)}$, where $g$ does not depend on $\Delta$.

Finally, from the parameterized complexity point of view, we can ask about the complexity of the MIS problem under the natural parameterization by the solution size $k$.
Can we decide in time $f(k,t) \cdot n^{g(t)}$ if a given $n$-vertex $S_{t,t,t}$-free graph has an independent set of at least $k$ vertices?
This problem is also interesting for $P_t$-free graphs.

\paragraph*{Acknowledgment.}
We are grateful to Vadim Lozin for pointing out Corollary~\ref{cor:disconnected}.

\bibliographystyle{abbrv}
\bibliography{main}

\end{document}